\numberwithin{equation}{section}
\newcommand{\bbR}{\mathbb{R}}
\newtheorem{theorem}{Theorem}[section]
\newtheorem{lemma}{Lemma}[section]
\newtheorem{remark}{Remark}[section]
\numberwithin{equation}{section}
\begin{document}

\title{A random-batch Monte Carlo method for many-body systems with singular kernels}

\author[1,2]{Lei Li\thanks{leili2010@sjtu.edu.cn}}
\author[1,2]{Zhenli Xu\thanks{xuzl@sjtu.edu.cn}}
\author[1]{Yue Zhao\thanks{sjtu-15-zy@sjtu.edu.cn}}
\affil[1]{School of Mathematical Sciences, Shanghai Jiao Tong University, Shanghai, 200240, P. R. China}
\affil[2]{Institute of Natural Sciences and MOE-LSC, Shanghai Jiao Tong University, Shanghai, 200240, P. R. China}

\date{}
\maketitle

\begin{abstract}
We propose a fast potential splitting Markov Chain Monte Carlo method which costs $O(1)$ time each step
for sampling from equilibrium distributions (Gibbs measures) corresponding to particle systems with singular interacting kernels.
We decompose the interacting potential into two parts, one is of long range but is smooth, and the other one is of short range
but may be singular. To displace a particle, we first evolve a selected particle using the stochastic differential equation (SDE)
under the smooth part with the idea of random batches, as commonly  used in stochastic gradient Langevin dynamics.
Then, we use the short range part to do a Metropolis rejection. Different from the classical Langevin dynamics, we only run the SDE dynamics with
random batch for a short duration of time so that the cost in the first step is $O(p)$, where $p$ is the batch size.
The cost of the rejection step is $O(1)$ since the interaction used is of short range. We justify the proposed random-batch Monte Carlo method, which
combines the random batch and splitting strategies, both in theory and with numerical experiments. While giving comparable results
for typical examples of the Dyson Brownian motion and Lennard-Jones fluids, our method can save more time when compared to the classical Metropolis-Hastings
algorithm.

{\bf Key words}. Markov chain Monte Carlo, Langevin dynamics, random batch method, stochastic differential equations

{\bf AMS subject classifications}. 82B80, 60H35, 65C05
\end{abstract}

\section{Introduction}\label{sec:intro}

The statistics of many-body systems such as pressure, energy and entropy at equilibrium can describe the physical properties
of the systems \cite{lifshitz2013statistical}. The values of these quantities correspond to expectations of some functions under
the equilibrium, which is often approximated by using Monte Carlo sampling methods \cite{allen1987,frenkel2001understanding}.
The Markov Chain Monte Carlo (MCMC) methods \cite{gilks1995markov,gamerman2006markov} are among the most popular Monte Carlo methods.
By constructing Markov chains that have the desired distributions to be the invariant measures, one can obtain samples from
the desired distributions by recording the states of the Markov chains.
A typical MCMC algorithm is the Metropolis-Hastings algorithm \cite{metropolis1953,hastings1970monte}.

In this work, we revisit this classical problem, namely sampling from equilibrium distributions for many-body interacting
particle systems. Suppose that we want to sample from the $N$-particle Gibbs distribution (or generally the Boltzmann distribution) \cite{lifshitz2013statistical}
\begin{gather}\label{eq:Gibbs}
\pi(\bm{X})\propto \exp\left[-\beta H(\bm{X})\right],
\end{gather}
with $\bm{X}=(x_1,\ldots, x_N)\in \bbR^{Nd}$ ($x_i\in \bbR^d$,  and $d\ge 1, d\in \mathbb{N}$), $\beta$ being a positive constant, the $N$-body energy
\begin{gather}
H(\bm{X}):=\sum_{i=1}^N w_i V(x_i)+\sum_{i,j: i<j} w_i w_j U(x_i-x_j),
\end{gather}
and $V$ being the external potential which is assumed to be smooth. Without loss of generality, the kernel for interaction potential $U(\cdot): \bbR^d\to \bbR$ is assumed to be of the symmetric form
\begin{gather}
U(x)=u(|x|),
\end{gather}
and is possibly singular and of long range. The weight $w_i$ is often the mass or charge of the particle measuring how important
the $i$th particle is. In this work, we will only consider
\begin{gather}
w_i\equiv w,
\end{gather}
and the generalization of the proposed algorithm to nonuniform weights is straightforward.

As mentioned above, the classical algorithm to sample from the Gibbs distribution is MCMC algorithms including the Metropolis algorithm.
Later on, there are some variants like the multiple time-step MC algorithm (MTS-MC) \cite{hetenyi2002multiple} using the splitting idea. Often, these algorithms update a randomly chosen particle per Markov jump in the state space, as updating one particle per iteration can be more efficient (see Ref. \cite{frenkel2001understanding} for more explanation).
A rejection step is performed following some random movement. The acceptance probability is computed using the difference between the energies of two configurations according to the detailed balance condition. Hence, computing the acceptance probability often takes $O(N)$ time after moving a randomly chosen particle. Due to this reason, throughout this paper, we will use ``an iteration'' or ``a move'' to mean such a Markov jump together with the rejection step.

Another class of sampling methods is the Langevin sampling \cite{besag1994comments,roberts1996,dalalyan2017,raginsky2017non}. In fact,
if one moves $X^i$ according to the following stochastic differential equation (SDE) system \cite{oksendal03}
\begin{gather}\label{eq:langevinsampling}
dX^i= -\gamma\left[ w \nabla V(X^i)+w^2\sum_{j: j\neq i} \nabla U(X^i-x_j)\right] \,dt+\sqrt{\frac{2\gamma}{\beta}}\, dW^i,
\end{gather}
with $X^i(0)$ being arbitrary and $\{W^i\}$ being independent standard $d$-dimensional Wiener processes (Brownian motions) for $i=1,\cdots, N$, 
then the invariant measure is the Gibbs distribution \eqref{eq:Gibbs}. For the Langevin dynamics, one often
desires the forcing term to be of order $1$, so one may choose
\begin{gather}\label{eq:gammaexpress}
\gamma=\frac{1}{w^2(N-1)}.
\end{gather}
For the equations with a big summation on the right hand side, one may apply the mini-batch ideas \cite{robbins1951stochastic,bottou1998online,bubeck2015convex,jin2018random}  to reduce the cost.
The random mini-batch idea was famous by its application in the stochastic gradient descent (SGD) \cite{robbins1951stochastic,bottou1998online}.
Recently, it has been applied to interacting particle systems by Jin {\it et al.} yielding a fast algorithm called the random batch method (RBM) \cite{jin2018random}, which is asymptotic-preserving in the mean field limit regime. This algorithm not only computes the equilibrium distribution correctly, but also can capture the dynamics of the interacting particle systems. Building the mini-batch ideas into the Langevin
sampling leads to the stochastic gradient Langevin dynamics (SGLD) \cite{welling2011bayesian,ma2015complete},
which is a fast approximate Monte Carlo method, and has been widely used for Bayesian inference problems \cite{Martin:SISC:12}.
Recently, some particle based variational inference sampling methods have been proposed (see \cite{liu2016stein,rezende2015variational,dai2016provable}).
These methods update particles by solving optimization problems, and each iteration is expected to make progress.
The Stein variational gradient descent \cite{liu2016stein} is a typical example. Applying the random batch idea to it can yield an efficient sampling method \cite{li2019stochastic}.

In this paper,  we combine the splitting idea (as in MTS-MC) and the random mini-batches idea (as in SGLD algorithm) to get a fast MCMC sampling algorithm. We decompose the interacting potential into two parts, one is of long range but is smooth, and the other one is of  short range but may be singular. As commented in Ref. \cite[Chap. 3]{frenkel2001understanding} and mentioned above, we choose to update only one particle per move. To move a selected particle, we first evolve it using the SDE under the smooth part with the idea of random batches, as commonly  used in SGLD. Then, we use the short range part to do a Metropolis rejection. Different from SGLD, we only run the SDE dynamics with random batch for a short duration of time so that the cost in the first step is $O(p)$, where $p$ is the batch size. The cost of the rejection step is $O(1)$ since the interaction used is of short range.
The process of our method seems reminiscent of the Metropolis-adjusted Langevin algorithm (MALA) \cite{besag1994comments,roberts1996}, where a discretized SDE was run first and then a Metropolis rejection was performed. However, we point out that the Metropolis rejection in MALA is only to correct the small errors introduced by the discretization. In our approach, we have the splitting explicitly and the rejection step aims to avoid unphysical configurations where one particle is moved to the singular core of the nearby particle, which greatly improves the sampling efficiency.

Let us make a short discussion of the regimes to consider. In most cases, we focus on the molecular regime so that $w=1$ and $\beta=O(1)$, and hence the number $\sqrt{2\gamma/\beta}$ by \eqref{eq:gammaexpress} is often small.
 In the mean field regime \cite{stanley1971, georges1996, lasry2007},
one may have $w\sim N^{-1}$ (see the discussion below and the example in section \ref{subsec:db}). Each particle satisfies the following equation,
\begin{gather}
dX^i= -\left[ \nabla V(X^i)+\frac{1}{N}\sum_{j: j\neq i} \nabla U(X^i-x_j)\right] \,dt+\sqrt{\frac{2N}{\beta}}\, dW^i.
\end{gather}
As is well known \cite{jabin2017,li2019mean}, if $\beta\sim N$ and $\sqrt{2\gamma/\beta}=O(1)$, the stationary
{\it single particle} distribution can be approximated by the solution to the nonlinear Fokker-Planck equation
\begin{gather}
\nabla\cdot(\nabla V \rho)+\nabla\cdot(\rho \nabla U*\rho)+\Delta\rho=0.
\end{gather}

The rest of the paper is organized as follows. In section \ref{sec:split}, we propose the splitting strategy and give its justifications. Based on this splitting, a move of the Markov chain consists of two parts, one is the evolution of an SDE under the smooth part of the potential while the other one is the rejection under the short range part of the potential. We verify that the Gibbs distribution is the invariant measure of the constructed Markov chain. In section \ref{sec:sderandom}, we apply the idea of random mini-batches to the SDE step and obtain our sampling algorithm. Section \ref{sec:numerical} is devoted to the numerical justifications, where we compare our method with Metropolis-Hastings algorithm on the Dyson Brownian motion example and Lennard-Jones fluids. Conclusions are made in Section \ref{sec:conc}.

\section{Splitting Monte Carlo method}\label{sec:split}

In practice, the interaction kernel between a pair of particles is often singular
and possibly with long range. The typical example is the potential kernel of the following form
\begin{gather}
U(x)=\frac{1}{|x|^{\alpha}},
\end{gather}
where $\alpha>0$. See Example 1 in Section \ref{sec:numerical} for an example. Other examples include the famous Coulomb potential and the London potential \cite{Israelachvili::2010}.
If we do sampling directly using the Metropolis-Hastings algorithm \cite{hastings1970monte} (see also Algorithm \ref{alg:mh} in Section \ref{sec:numerical}),
to get a proposed move, one must consider the interaction with all other $N-1$ particles, which makes the computational cost of the algorithm $O(N)$ per move.
Moreover, due to the singular interaction, the acceptance rate can be lower.

Inspired by the MTS-MC Method \cite{hetenyi2002multiple}, we consider a splitting strategy for the binary potential $U$ such that the single-variable function $u(r)$ becomes the summation of $u_1(r)$ and $u_2(r).$ Let $U_i(x):=u_i(|x|)$ and correspondingly we have,
\begin{gather}
U(x)=U_1(x)+U_2(x),
\end{gather}
where we suppose that $U_1$ has long range but is smooth and bounded, and $U_2$ is singular and compactly supported.
The splitting is convenient because one can apply the Langevin dynamics \cite{roberts1996,dalalyan2017,raginsky2017non} for the $U_1$ part. The singular $U_2$ part can then be used for the rejection.

There are many ways to split the potential. In the adiabatic nuclear and electronic sampling Monte Carlo (ANES--MC) method \cite{martin1998novel}, the splitting is to decompose $U$ into ``fast'' and ``slow'' dynamics. In the MTS-MC method, the potential is decomposed into short-range and long-range parts, the long range part is evaluated less frequently than the short range part to improve the efficiency. In our method, we also split the potential into short-range part and long-range part, but our motivation is different from MTS-MC. In fact, we aim to build the random mini-batch idea in SGD and SGLD into the algorithm to reduce the cost, but the mini-batch method works well for smooth potentials. Naturally, we do the splitting and ask the long range part $U_1$ to be smooth and use the singular part to do rejection. In Section \ref{subsec:ljnumerics}, we will see that for Lennard-Jones potential, our method will have comparable CPU speedup with that of MTS-MC in \cite{hetenyi2002multiple} when $N=500$. Since our algorithm is $O(1)$ per iteration, we expect that the cost
to run a cycle of samples will be linear in $N$ as $N$ grows.  In our current implementation, the splitting is kind of far from the best. How to optimize the splitting to gain better spectral gap of the semigroup for the Markov chain is an interesting question left for future.

Below, we aim to justify this splitting by showing that the desired Gibbs distribution is an invariant measure of the Markov chain.

 Suppose there are $N$ particles located at $x_j$ for $j=1,\cdots, N$.  Let us consider the following method for a Markovian jump.

{\it Step 1 ---} Randomly choose a particle $i$.

{\it Step 2 ---}
 Move the particle using $U_1$ with overdamped Langevin equation:
 \begin{gather}\label{eq:sde1}
 \begin{split}
 & dX^i= -\left(\frac{\nabla V(X^i)}{w(N-1)}+\frac{1}{N-1}\sum_{j: j\neq i} \nabla U_1(X^i-x_j)\right) \,dt+\sqrt{\frac{2}{(N-1)w^2 \beta}}\, dW^i,\\
 & X^i(0)=x_i,
 \end{split}
 \end{gather}
 where $x_j$'s are fixed.
  Evolve this SDE with some time $t>0$ and obtain $X^i(t)\rightarrow x_i^*$ as a candidate position of particle $i$ for the new sample.

{\it Step 3 ---}
Use $U_2$ to do the Metropolis rejection. Define
 \begin{gather}\label{eq:acceptance}
 acc(x_i, x_i^*)=\min\left\{1, \exp\Big[-\beta \sum_{j: j\neq i}w^2(U_2(x_i^*-x_j)-U_2(x_i-x_j))\Big]  \right\}.
 \end{gather}
 With probability $acc(x_i, x_i^*)$, we accept $x_i^*$ and set
 \begin{gather}
 x_i \leftarrow x_i^*.
 \end{gather}
Otherwise, $x_i$ is unchanged. We then obtain a new sample $\{x_1,\cdots,x_N\}$ in the Markov chain.

In comparison with the multiple time-step splitting algorithm, we do not have rejection in {\it Step 2} while {\it Step 3} can be done
in $O(1)$ time since $U_2$ is local. The correctness of this splitting is guaranteed by Theorem \ref{thm:invariant}.
Before stating this theorem, we have the following lemma.

\begin{lemma}\label{lmm:detailedbalanceSDE}
The SDE \eqref{eq:sde1} has invariant measure
\begin{gather}
\pi_1(z)\propto \exp\left[-\beta\Big(w V(z)+w^2 \sum_{j:j\neq i}U_1(z-x_j)\Big) \right].
\end{gather}
Moreover, if $p(z, t; x_i)$ denotes the transition density from $x_i$, then
the detailed balance condition holds
\begin{gather}\label{eq:detailbalance}
\pi_1(x_i)p(x_i^*, t; x_i)=\pi_1(x_i^*) p(x_i, t; x_i^*).
\end{gather}
\end{lemma}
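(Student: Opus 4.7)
The plan is to recognize \eqref{eq:sde1} as an overdamped Langevin equation of gradient form with a single effective potential, verify that the proposed $\pi_1$ satisfies the stationary Fokker--Planck equation, and then invoke the standard reversibility of gradient diffusions to deduce the detailed balance condition \eqref{eq:detailbalance}. Accordingly, I would absorb the constants by setting
$$
\Phi(z) := \frac{V(z)}{w(N-1)} + \frac{1}{N-1}\sum_{j \neq i} U_1(z - x_j), \qquad D := \frac{1}{(N-1)w^2\beta},
$$
so that \eqref{eq:sde1} becomes the canonical SDE $dX^i = -\nabla\Phi(X^i)\,dt + \sqrt{2D}\,dW^i$. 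Its generator is $\mathcal{L}f = -\nabla\Phi \cdot \nabla f + D\Delta f$, with formal $L^2(dz)$-adjoint $\mathcal{L}^*\rho = \nabla\cdot(\rho\nabla\Phi) + D\Delta\rho$.

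For invariance, I set $\pi_1(z) \propto \exp(-\Phi(z)/D)$. A direct computation gives $\Phi/D = \beta\bigl(wV(z) + w^2\sum_{j\neq i}U_1(z - x_j)\bigr)$, which matches the formula stated in the lemma. Since $D\nabla\pi_1 = -\pi_1\nabla\Phi$, one has $\pi_1\nabla\Phi + D\nabla\pi_1 \equiv 0$, hence $\mathcal{L}^*\pi_1 = 0$, confirming stationarity.

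For detailed balance, I would show that $\mathcal{L}$ is self-adjoint on $L^2(\pi_1)$. Using $D\nabla\pi_1 = -\pi_1\nabla\Phi$ and one integration by parts,
$$
\int (\mathcal{L}f)\,g\,\pi_1\,dz = -D\int \nabla f \cdot \nabla g \,\pi_1\,dz,
$$
which is manifestly symmetric in $f$ and $g$. Consequently the semigroup $P_t = e^{t\mathcal{L}}$ satisfies $\int (P_t f)\,g\,\pi_1\,dz = \int f\,(P_t g)\,\pi_1\,dz$; inserting $P_t f(z) = \int p(z',t;z)\,f(z')\,dz'$ and letting $f,g$ range over indicators of measurable sets yields $\pi_1(z)p(z',t;z) = \pi_1(z')p(z,t;z')$ almost everywhere, and spatial continuity of $p$ for $t>0$ (standard for nondegenerate diffusions with smooth drift) upgrades this to \eqref{eq:detailbalance} pointwise.

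The main technical obstacle will be justifying the integration by parts and the vanishing of boundary terms at infinity. This needs mild coercivity of $V$ so that $\pi_1$ is normalizable and decays sufficiently fast, together with the smoothness and boundedness of $U_1$ that is part of the splitting hypothesis. Under these assumptions the argument reduces to the textbook equivalence between reversibility, self-adjointness of the generator in $L^2(\pi_1)$, and detailed balance of the transition density.
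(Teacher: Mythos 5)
Your proposal is correct and is essentially the paper's argument: both reduce to the zero-flux identity $D\nabla\pi_1=-\pi_1\nabla\Phi$ (equivalently $-\pi_1 b+\tfrac12\nabla\cdot(\sigma\sigma^T\pi_1)=0$) for a gradient drift with scalar diffusion, and then invoke standard reversibility of the resulting diffusion. The only cosmetic difference is that you pass to the transition density via self-adjointness of the generator on $L^2(\pi_1)$ (the Dirichlet-form route), while the paper verifies the same condition by a formal distributional computation with $\mathcal{L}^*\delta(\cdot-x_i)$ tested against $\varphi$; your explicit flagging of the integration-by-parts/normalizability caveat is, if anything, slightly more careful than the paper's ``formal'' verification.
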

The claims here are standard in the SDE theory and we attach a proof in Appendix \ref{app:proof1} for the convenience of the readers.

Now we have the following theorem on the detailed balance of the algorithm composed of {\it Steps 1 -- 3},
namely, the Langevin dynamics step
using $U_1$ and the Metropolis rejection step using $U_2$ for a randomly chosen particle.
\begin{theorem}\label{thm:invariant}
Let $\bm{X}=(x_1,\cdots, x_N)$ be a point in the state space of the $N$ particles.
Let $T(\bm{X}, \bm{X}^*)$ be the transition probability density for {\it Steps 1--3}.
The splitting Monte Carlo satisfies the following detailed balance:
\begin{gather}\label{eq:fulldetailbalance}
\pi(\bm{X})T(\bm{X}, \bm{X}^*)=\pi(\bm{X}^*)T(\bm{X}^*, \bm{X}),
\end{gather}
where $\pi(\bm{X})$ is given by \eqref{eq:Gibbs}.
\end{theorem}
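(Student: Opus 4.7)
The plan is to exploit the two-step structure of the move by writing the transition kernel explicitly, reduce to the case where only one coordinate changes, and then combine the SDE detailed balance from Lemma \ref{lmm:detailedbalanceSDE} with the standard Metropolis symmetry. Since a single move only updates one particle, the kernel $T(\bm{X},\bm{X}^*)$ is not a smooth density on $\bbR^{Nd}$ but a measure concentrated on the union of hyperplanes where $N-1$ coordinates agree. I would first write
\begin{gather*}
T(\bm{X},\bm{X}^*)=\frac{1}{N}\sum_{i=1}^N\Big[\prod_{j\neq i}\delta(x_j-x_j^*)\Big]\Big\{p(x_i^*,t;x_i)\,acc(x_i,x_i^*)+r(x_i)\,\delta(x_i-x_i^*)\Big\},
\end{gather*}
where $r(x_i)=1-\int p(y,t;x_i)\,acc(x_i,y)\,dy$ is the rejection mass. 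The pure-rejection contribution trivially satisfies \eqref{eq:fulldetailbalance} on the diagonal $\bm{X}=\bm{X}^*$, so the real content reduces to verifying, for each fixed $i$ and off the diagonal,
\begin{gather*}
\pi(\bm{X})\,p(x_i^*,t;x_i)\,acc(x_i,x_i^*)=\pi(\bm{X}^*)\,p(x_i,t;x_i^*)\,acc(x_i^*,x_i).
\end{gather*}

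Next I would compute the ratio $\pi(\bm{X}^*)/\pi(\bm{X})$. Because $\bm{X}$ and $\bm{X}^*$ differ only in the $i$th coordinate, all terms in $H$ not involving particle $i$ cancel, leaving contributions from $V(x_i)$ and from $\sum_{j\neq i} U(x_i-x_j)=\sum_{j\neq i}[U_1(x_i-x_j)+U_2(x_i-x_j)]$. Splitting this ratio along $U=U_1+U_2$, I would identify the $V$-plus-$U_1$ factor with $\pi_1(x_i^*)/\pi_1(x_i)$ as defined in Lemma \ref{lmm:detailedbalanceSDE}, so that the detailed balance \eqref{eq:detailbalance} for the SDE gives
\begin{gather*}
\frac{p(x_i^*,t;x_i)}{p(x_i,t;x_i^*)}=\frac{\pi_1(x_i^*)}{\pi_1(x_i)}=\exp\Big[-\beta w(V(x_i^*)-V(x_i))-\beta w^2\sum_{j\neq i}\bigl(U_1(x_i^*-x_j)-U_1(x_i-x_j)\bigr)\Big].
\end{gather*}

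The remaining $U_2$ factor is absorbed by the Metropolis step. Using the elementary identity $\min(1,e^{-a})=e^{-a}\min(1,e^{a})$ applied to $a=\beta\sum_{j\neq i}w^2(U_2(x_i^*-x_j)-U_2(x_i-x_j))$, I would show
\begin{gather*}
\frac{acc(x_i,x_i^*)}{acc(x_i^*,x_i)}=\exp\Big[-\beta w^2\sum_{j\neq i}\bigl(U_2(x_i^*-x_j)-U_2(x_i-x_j)\bigr)\Big].
\end{gather*}
Multiplying the two displayed ratios exactly reproduces $\pi(\bm{X}^*)/\pi(\bm{X})$, which yields \eqref{eq:fulldetailbalance}. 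Finally I would note that summing over $i$ is harmless because for $\bm{X}\neq\bm{X}^*$ differing in more than one coordinate both sides vanish, and for $\bm{X}\neq\bm{X}^*$ differing in exactly one coordinate only the term with that index survives on both sides of \eqref{eq:fulldetailbalance}.

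The only real obstacle is bookkeeping: one has to be careful that the random choice of particle in Step 1 contributes the same $1/N$ factor to the forward and reverse transitions (so it cancels), and that the delta-function part of $T$ is handled consistently on and off the diagonal. Once the kernel is written correctly, the argument is a direct combination of Lemma \ref{lmm:detailedbalanceSDE} and the usual Metropolis identity, and integrating \eqref{eq:fulldetailbalance} over $\bm{X}$ then gives that $\pi$ is invariant under the chain.
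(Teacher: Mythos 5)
Your proof is correct and follows essentially the same route as the paper: reduce to the case where only the $i$th coordinate changes, cancel the $V$-plus-$U_1$ part of the Gibbs ratio against the SDE transition density via Lemma \ref{lmm:detailedbalanceSDE}, and absorb the $U_2$ part with the standard Metropolis identity. Your explicit treatment of the delta-function structure and the rejection mass $r(x_i)$ is a slightly more careful bookkeeping of the kernel than the paper provides, but the mathematical content is identical.
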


\begin{proof}

Note that the method consists of {\it Steps 1--3}, and only one particle is updated. Hence, $T(\bm{X}, \bm{X}^*)$ is nonzero only if $\bm{X}$ and $\bm{X}^*$ are different by one component at most. Hence, we only have to check two states where the $i$th component is different.

Without loss of generality, we assume $\bm{X}^*=(x_1, \ldots, x_i^*, \ldots, x_N)$ and $x_i\neq x_i^*$. Clearly,
\begin{gather}
T(\bm{X}, \bm{X}^*)=\frac{1}{N}p(x_i^*, t; x_i)acc(x_i, x_i^*),
\end{gather}
where $1/N$ is the probability for choosing the random particle.

Note that
\begin{gather}
\begin{split}
\pi(\bm{X})&=\frac{1}{Z_N}\exp\left[-\beta \Big(\sum_i wV(x_i)+\sum_{j,k: j<k}w^2 U(x_j, x_k)\Big)\right] \\
&=\frac{\pi_1(x_i)\pi_2(x_i)}{Z_3}\exp\left[-\beta \sum_{\ell: \ell\neq i} w V(x_\ell)\right]\exp\left[-\beta \sum_{j<k: j,k\neq i}w^2U(x_j-x_k)\right],
\end{split}
\end{gather}
where
\begin{gather}
\pi_2(z)\propto \exp\left[-\beta\sum_{j: j\neq i}w^2 U_2(z-x_j) \right],
\end{gather}
and $Z_N$ and $Z_3$ are some normalizing constants.
Consequently, the detailed balance condition \eqref{eq:fulldetailbalance} is reduced to
\begin{gather}\label{eq:fulldetailbalance2}
\pi_1(x_i)\pi_2(x_i)p(x_i^*, t; x_i) acc(x_i, x_i^*)=\pi_1(x_i^*)\pi_2(x_i^*) p(x_i, t; x_i^*) acc(x_i^*, x_i).
\end{gather}

Using the detailed balance \eqref{eq:detailbalance} for the SDE, Eq. \eqref{eq:fulldetailbalance2} is then reduced to
\begin{gather}
\pi_2(x_i)acc(x_i, x_i^*)=\pi_2(x_i^*) acc(x_i^*, x_i),
\end{gather}
which clearly holds by the formula of the Metropolis rejection probability \eqref{eq:acceptance}.
\end{proof}

\begin{remark}
The detailed balance \eqref{eq:detailbalance} allows us to solve the SDE on a fixed time interval, far before reaching the equilibrium. If the SDE $dX=b(X)\,dt+\sigma(X) dW$ does not have detailed balance (or $-b\pi
+\frac{1}{2}\nabla\cdot(\sigma\sigma^T \pi)\neq 0$), then {\it Step 2} should be run for long time so that $p(x, t; x_i)\approx \pi_1(x)$.  This is a problem since the Euler-Maruyama scheme can be guaranteed to converge on finite time interval, but it may not converge for infinite time interval.
\end{remark}

 \section{Random batch sampling algorithm}\label{sec:sderandom}

 Note that {\it Step 3} can be done in $O(1)$ operations using some standard data structures such as the cell list \cite[Appendix F]{frenkel2001understanding} so we want to reduce the time cost in {\it Step 2} to $O(1)$ as well.
The idea is to use the mini-batch approach, similar to the idea in SGLD \cite{welling2011bayesian,ma2015complete}.

 \subsection{Discretization of the SDE with random batch}\label{sec:sderandombatch}

We now focus on {\it Step 2}, and discretize the SDE with the Euler-Maruyama scheme \cite{kloeden2013numerical,milstein2013stochastic}.
The interaction force is approximated by that within the mini batch. This then gives the following discrete approximation.

Pick a batch size $p>1.$ Let $X^i\leftarrow x_i$.
Run the following approximation for $m$ steps ($k=0,\ldots, m-1$), where $\bm{\xi}_k$ is a random set with size $p-1$ at step $k$, which is a subset of $\{1, 2, \cdots, N\}\setminus \{i\}$:
 \begin{gather}\label{eq:randombatchEuler}
 X^i\leftarrow X^i-\tau_k \left[\frac{\nabla V(X^i)}{w(N-1)}+\frac{1}{p-1}\sum_{j\in \bm{\xi}_k} \nabla U_1(X^i-x_j)\right]+\sqrt{\frac{2\tau_k}{(N-1)w^2 \beta}} z_k,
 \end{gather}
 where  $\tau_k>0$ is a step size and $z_k\sim \mathcal{N}(0, I_d)$ with $d$ being the dimension.
 Here $\mathcal{N}(\mu, \Sigma)$ represents the multivariate normal distribution with mean $\mu$ and covariant matrix $\Sigma$.

In practice, the number $m$ is independent of $N$ and a small constant in order to reduce the algorithm complexity.
The scheme \eqref{eq:randombatchEuler} can be further written as, for $k=1,\cdots, m,$
\begin{gather}
X^i \leftarrow X^i-\tau_k\left[\frac{\nabla V(X^i)}{w(N-1)}+\frac{1}{N-1} \sum_{j: j\neq i} \nabla U_1(X^i-x_j)\right]+\sqrt{\frac{2\tau_k}{(N-1)w^2 \beta}} z_k+\tau_k\zeta_k^i,
\end{gather}
where
\begin{gather}
\zeta_k^i:=-\frac{1}{p-1}\sum_{j\in \bm{\xi}_k} \nabla U_1(X^i-x_j)+\frac{1}{N-1} \sum_{j: j\neq i} \nabla U_1(X^i-x_j).
\end{gather}

The following lemma is straightforward and we omit its proof (see Refs. \cite{hu2019diffusion} and \cite{jin2018random} for similar results).
\begin{lemma}\label{lmm:randomforce}
Random variable
$\zeta_k^i$ has zero mean, and the covariance matrix is given by
\begin{gather}
\mathrm{cov}(\zeta_k^i)=\left(\frac{1}{p-1}-\frac{1}{N-1}\right)\Lambda(x_1,\cdots, x_{i-1}, X^i, x_{i+1},\cdots, x_N),
\end{gather}
where $\Lambda$ is defined by
\[
\Lambda(x_1,\cdots, x_N) :=\frac{1}{N-2}\sum_{j: j\neq i}[\nabla U_1(x_i-x_j)-G(x_i)]\otimes [\nabla U_1(x_i-x_j)-G(x_i)]
\]
with
\[
G(x_i)=\frac{1}{N-1}\sum_{j: j\neq i}\nabla U_1(x_i-x_j).
\]
\end{lemma}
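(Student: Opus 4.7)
My plan is to verify the two claims by direct computation, exploiting the fact that $\bm{\xi}_k$ is a uniformly random $(p-1)$-subset of $\{1,\ldots,N\}\setminus\{i\}$. Write $g_j := \nabla U_1(X^i - x_j)$ and $G := G(X^i) = \frac{1}{N-1}\sum_{j\neq i} g_j$ (with $x_i$ replaced by the running $X^i$), so that
\begin{gather*}
\zeta_k^i = -\frac{1}{p-1}\sum_{j\in\bm{\xi}_k} g_j + G.
\end{gather*}
For a uniform random subset, each fixed index $j$ lies in $\bm{\xi}_k$ with probability $(p-1)/(N-1)$, and any two distinct indices $j\neq \ell$ both lie in $\bm{\xi}_k$ with probability $(p-1)(p-2)/[(N-1)(N-2)]$; these two identities are the only combinatorial inputs needed.

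First, taking expectations term by term gives
\begin{gather*}
\bbE\Bigl[\tfrac{1}{p-1}\sum_{j\in\bm{\xi}_k} g_j\Bigr] = \tfrac{1}{p-1}\cdot \tfrac{p-1}{N-1}\sum_{j\neq i} g_j = G,
\end{gather*}
so $\bbE[\zeta_k^i]=0$. For the covariance, I would introduce the centered gradients $\tilde g_j := g_j - G$, which satisfy $\sum_{j\neq i}\tilde g_j = 0$, and rewrite
\begin{gather*}
\zeta_k^i = -\tfrac{1}{p-1}\sum_{j\in\bm{\xi}_k}(g_j - G) = -\tfrac{1}{p-1}\sum_{j\in\bm{\xi}_k}\tilde g_j,
\end{gather*}
since the $G$ contribution telescopes exactly because $|\bm{\xi}_k|=p-1$. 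This centering is the key algebraic step: it turns the raw batch sum into a sum of mean-zero (conditional on the current configuration) terms and lets the $j\neq \ell$ cross-sum collapse.

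Next, expand
\begin{gather*}
\mathrm{cov}(\zeta_k^i) = \tfrac{1}{(p-1)^2}\,\bbE\Bigl[\sum_{j\in\bm{\xi}_k}\tilde g_j\otimes \tilde g_j + \sum_{\substack{j,\ell\in\bm{\xi}_k\\ j\neq \ell}}\tilde g_j\otimes \tilde g_\ell\Bigr].
\end{gather*}
Using the two inclusion probabilities stated above, the diagonal piece becomes $\tfrac{p-1}{N-1}\sum_{j\neq i}\tilde g_j\otimes\tilde g_j$, while the off-diagonal piece becomes $\tfrac{(p-1)(p-2)}{(N-1)(N-2)}\sum_{j\neq \ell;\, j,\ell\neq i}\tilde g_j\otimes\tilde g_\ell$. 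Now invoke $\sum_{j\neq i}\tilde g_j=0$ to conclude
\begin{gather*}
\sum_{\substack{j,\ell\neq i\\ j\neq \ell}}\tilde g_j\otimes\tilde g_\ell = \Bigl(\sum_{j\neq i}\tilde g_j\Bigr)^{\otimes 2} - \sum_{j\neq i}\tilde g_j\otimes\tilde g_j = -\sum_{j\neq i}\tilde g_j\otimes\tilde g_j.
\end{gather*}
Substituting and simplifying the prefactor yields $\tfrac{N-p}{(p-1)(N-1)(N-2)}\sum_{j\neq i}\tilde g_j\otimes\tilde g_j$, and an elementary rearrangement shows this equals $\bigl(\tfrac{1}{p-1}-\tfrac{1}{N-1}\bigr)\Lambda$, which is the claimed expression.

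The computation is essentially routine; there is no real obstacle. The only place one must be careful is in the centering step and in using $\sum_{j\neq i}\tilde g_j=0$ to replace the off-diagonal sum by the negative of the diagonal sum, since this is what ultimately produces the characteristic $1/(N-2)$ factor in $\Lambda$ and the sampling-with-vs-without-replacement correction $\tfrac{1}{p-1}-\tfrac{1}{N-1}$.
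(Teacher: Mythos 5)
Your computation is correct: the inclusion probabilities $\Pr(j\in\bm{\xi}_k)=\frac{p-1}{N-1}$ and $\Pr(j,\ell\in\bm{\xi}_k)=\frac{(p-1)(p-2)}{(N-1)(N-2)}$, the centering $\zeta_k^i=-\frac{1}{p-1}\sum_{j\in\bm{\xi}_k}\tilde g_j$ with $\sum_{j\neq i}\tilde g_j=0$, and the final simplification to $\frac{N-p}{(p-1)(N-1)(N-2)}\sum_{j\neq i}\tilde g_j\otimes\tilde g_j=\bigl(\frac{1}{p-1}-\frac{1}{N-1}\bigr)\Lambda$ all check out, with the covariance correctly understood as conditional on the current configuration (hence the $X^i$ argument in $\Lambda$). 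The paper omits this proof as ``straightforward'' and points to the random-batch literature; your argument is exactly the standard sampling-without-replacement calculation intended there.
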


The variance of the term $\tau_k \zeta_k^i$ is thus of order $\tau_k^2$, which is small. Hence,
this gives an effective approximation for the SDE.
In our algorithm, the SGLD is performed on finite interval $[0, T]$ with $T=O(1)$ or $T\ll 1$.
 We now estimate the error in the transition probability introduced by discretizing the SDE and applying the mini-batch idea.

For convenience, we only consider constant step size in this section
$ \tau_k\equiv \tau>0.$
To justify the mini-batch discretization proposed above, we will show that the transition probabilities are close in the Wasserstein distance. We recall that the Wasserstein-$2$ distance \cite{santambrogio2015} is given by
\begin{gather}\label{eq:W2}
W_2(\mu, \nu)=\left(\inf_{\gamma \in \Pi(\mu,\nu)}\int_{\mathbb{R}^d\times\mathbb{R}^d}|x-y|^2 d\gamma\right)^{1/2},
\end{gather}
where $\Pi(\mu,\nu)$ means all the joint distributions whose marginal distributions are $\mu$ and $\nu$ respectively.

We in fact have the following Theorem \ref{thm:errestimates} for the error estimate.
In practice, we take $T=m\tau$ with $m=O(1)$ so the error is in fact of order $\tau$.
\begin{theorem}\label{thm:errestimates} Let $\lambda=1/w(N-1)$ and $D=1/\beta w^2(N-1).$
Suppose $\lambda\lesssim O(1)$. Suppose that $V$ and $U_1$ are smooth, and the derivatives up to third order are bounded. Let $Q(x, \cdot)$ be the transition probability of the SDE \eqref{eq:sde1} and $\widetilde{Q}(x, \cdot)$ be the transition probability computed by the discretization with random batch \eqref{eq:randombatchEuler}. Then,
\begin{gather}
\sup_{x}W_2(Q(x, \cdot), \widetilde{Q}(x, \cdot))\le C\sqrt{(e^{CT}-1)\left(\frac{1}{p-1}\|\nabla U_1\|_{\infty}
 \tau+(1+D^2)\tau^2\right)}.
\end{gather}
\end{theorem}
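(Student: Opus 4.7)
The plan is to bound $W_2$ via a synchronous coupling: let $X(t)$ be the exact SDE \eqref{eq:sde1} started at $x$ and driven by a Wiener process $W^i$, and let $\tilde{X}_k$ be the random-batch Euler scheme \eqref{eq:randombatchEuler} started at the same $x$, with its noise $\sqrt{2\tau D}\,z_k$ identified with $\sqrt{2D}\,(W^i(t_{k+1})-W^i(t_k))$. Because this provides one particular coupling in $\Pi(Q(x,\cdot),\widetilde{Q}(x,\cdot))$, we have $W_2^2(Q(x,\cdot),\widetilde{Q}(x,\cdot)) \le \bbE|X(T)-\tilde{X}_m|^2$, so it suffices to bound the mean-square error uniformly in $x$.

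Setting $e_k := X(t_k) - \tilde{X}_k$ with $e_0 = 0$, I would decompose the one-step increment as
\[
e_{k+1} = e_k - \tau\bigl[b(X(t_k)) - b(\tilde{X}_k)\bigr] - \int_{t_k}^{t_{k+1}}\bigl[b(X(s)) - b(X(t_k))\bigr]\,ds - \tau\zeta_k^i,
\]
where $b(y) := \lambda\nabla V(y) + \frac{1}{N-1}\sum_{j\neq i}\nabla U_1(y-x_j)$ is the exact drift; the diffusion terms cancel by the synchronous coupling. Squaring and taking expectation, I would control each contribution separately. First, boundedness of $\nabla^2 V$ and $\nabla^2 U_1$ makes $b$ Lipschitz, so $|\tau[b(X(t_k))-b(\tilde{X}_k)]|^2 \le C\tau^2|e_k|^2$ and the cross term with $e_k$ absorbs into a factor $(1+C\tau)\bbE|e_k|^2$. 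Second, applying Ito's formula to $b(X(s))$ on $[t_k,s]$ and using boundedness of $\nabla^3 V$ and $\nabla^3 U_1$ together with Ito isometry gives $\bbE\bigl|\int_{t_k}^{t_{k+1}}[b(X(s))-b(X(t_k))]\,ds\bigr|^2 \le C(1+D^2)\tau^3$, where the $D^2$ arises from squaring the Ito corrector $D\,\Delta b$. Third, Lemma \ref{lmm:randomforce} with $|\nabla U_1|\le\|\nabla U_1\|_\infty$ yields $\bbE|\tau\zeta_k^i|^2 \le C\tau^2\|\nabla U_1\|_\infty/(p-1)$, absorbing one factor of $\|\nabla U_1\|_\infty$ into $C$. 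The cross terms vanish in expectation because $\zeta_k^i$ is conditionally mean zero given the history and is independent of the Brownian increment over $[t_k,t_{k+1}]$.

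Combining these estimates gives the one-step recursion
\[
\bbE|e_{k+1}|^2 \le (1+C\tau)\,\bbE|e_k|^2 + C\tau\!\left(\frac{\|\nabla U_1\|_\infty}{p-1}\tau + (1+D^2)\tau^2\right)\!.
\]
Discrete Gronwall, using $\sum_{k=0}^{m-1}(1+C\tau)^k \le (e^{CT}-1)/(C\tau)$ with $T=m\tau$, cancels one factor of $\tau$ and produces exactly the bracket appearing in the theorem; taking a square root then yields the claimed Wasserstein-$2$ bound, uniformly in $x$. The main technical obstacle is the second local estimate: a crude Lipschitz-plus-Ito-isometry argument on $b(X(s))-b(X(t_k))$ only gives $C(1+D)\tau^3$, so extracting the sharper $(1+D^2)$ requires explicitly isolating the second-order Ito term $D\,\Delta b(X(t_k))$ in the Taylor expansion, squaring it directly, and treating the martingale remainder by Ito isometry. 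Keeping the independence structure of the random batch tidy so that all cross terms really drop out is the other delicate bookkeeping point.
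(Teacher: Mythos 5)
Your proposal is correct and follows essentially the same route as the paper: synchronous coupling of the Brownian increments, a one-step error decomposition into a Lipschitz drift difference, a local truncation term controlled by It\^o expansion of the drift, and the random-batch fluctuation controlled by Lemma \ref{lmm:randomforce}, followed by discrete Gronwall (the paper merely phrases the recursion through a continuous-time interpolation $Y_\tau$ and the terms $I_1$, $I_2$). The one ``obstacle'' you flag is not one: since $1+D\le \tfrac{3}{2}(1+D^2)$, the crude bound $C(1+D)\tau^3$ already implies the stated $C(1+D^2)\tau^3$, so no refined isolation of the $D\,\Delta b$ term is needed.
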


\begin{proof}
For the convenience, in this proof, we will denote the chosen $X^i$
 by $Y$ and $f_j(y):=U_1(y-x_j)$. Without loss of generality,
 we assume that $i=N$ and $j\in \{1,\ldots, N-1\}$.  Denote the $L^2(\mathbb{P})$ norm
 $\|\cdot \|:=\sqrt{\mathbb{E}|\cdot|^2}.$
To simplify the notation, we also define
 \begin{gather}
 g_1(y):=-\lambda \nabla V(y),~~\hbox{and}~~
 g_2(y):=-\frac{1}{N-1}\sum_{j=1}^{N-1}\nabla f_j(y).
 \end{gather}
The desired SDE is then given by
 \begin{gather}\label{eq:auxsde1}
 dY=g_1(Y)\,dt+g_2(Y)\,dt+\sqrt{2D}\,dW,~~t\in [0, T].
 \end{gather}

Define $ t_k:= k\tau.$
The numerical algorithm is given by
 \[
 Y^{k+1}=Y^k+g_1(Y^k)\tau-\frac{1}{p-1}\sum_{j\in \bm{\xi}_k}\nabla f_k(Y^k)\tau
 +\sqrt{2D}\Delta W_k,~~k=0,\cdots, m-1.
 \]
 To compare their distributions, we apply the standard coupling technique, namely,
 \begin{gather}
 \Delta W_k=W(t_{k+1})-W(t_k),
 \end{gather}
 where $W(\cdot)$ is the same Brownian motion used in \eqref{eq:auxsde1}.

To estimate the error of the numerical solution in comparison to the solution of the SDE, we consider the following equation for $t\in [t_{k}, t_{k+1}]$,
 \begin{gather}\label{eq:auxsde2}
 dY_{\tau}(t)=g_1(Y_{\tau}(t_k))\,dt-\frac{1}{p-1}\sum_{j\in \bm{\xi}_k}\nabla f_k(Y_{\tau}(t_k))\,dt
 +\sqrt{2D}\,dW.
 \end{gather}
 It is clear that
 \[
 Y_{\tau}(t_k)=Y^k.
 \]

 By It\^o's calculus, it is easily found that,
 \begin{multline}\label{eq:dissipationrela}
 \frac{d}{dt}\mathbb{E}|Y(t)-Y_{\tau}(t)|^2
 =2\mathbb{E}(Y(t)-Y_{\tau}(t))\cdot
 \Big\{[g_1(Y(t))-g_1(Y_{\tau}(t_k))] \\
 +g_2(Y(t))+\frac{1}{p-1}\sum_{j\in \bm{\xi}_k}\nabla f_k(Y_{\tau}(t_k)) \Big\}.
 \end{multline}

 Using \eqref{eq:auxsde1} and \eqref{eq:auxsde2},
 \begin{multline*}
 Y(t)-Y_{\tau}(t)=Y(t_k)-Y_{\tau}(t_k)+\int_{t_k}^t[g_1(Y(s))-g_1(Y_{\tau}(t_k))]\,ds \\
 +\int_{t_k}^t \left[g_2(Y(s))+\frac{1}{p-1}\sum_{j\in \bm{\xi}_k}\nabla f_k(Y_{\tau}(t_k)) \right]\,ds.
 \end{multline*}
 Using this formula, we can decompose the right hand side of \eqref{eq:dissipationrela} as $I_1+I_2$,
 where
 \begin{multline}
 I_1=2\mathbb{E}\Bigg[Y(t_k)-Y_{\tau}(t_k)+\int_{t_k}^t[g_1(Y(s))-g_1(Y_{\tau}(t_k))]\,ds \Bigg]
 \cdot   \\\Big[(g_1(Y(t))-g_1(Y_{\tau}(t_k)))
+ g_2(Y(t))+\frac{1}{p-1}\sum_{j\in \bm{\xi}_k}\nabla f_k(Y_{\tau}(t_k)) \Big]\\
 +2\mathbb{E}\int_{t_k}^t \left[g_2(Y(s))+\frac{1}{p-1}\sum_{j\in \bm{\xi}_k}\nabla f_k(Y_{\tau}(t_k)) \right]\,ds\cdot \big[g_1(Y(t))-g_1(Y_{\tau}(t_k))\big],
 \end{multline}
 and
  \begin{multline}
 I_2=\mathbb{E}\int_{t_k}^t \Big[g_2(Y(s))+\frac{1}{p-1}\sum_{j\in \bm{\xi}_k}\nabla f_k(Y_{\tau}(t_k)) \Big]  \cdot \Big[g_2(Y(t))+\frac{1}{p-1}\sum_{j\in \bm{\xi}_k}\nabla f_k(Y_{\tau}(t_k)) \Big]\,ds.
 \end{multline}

 Since $Y(t)$ and $Y_{\tau}(t_k)$ are independent of random batch at $t_k$ (namely $\bm{\xi}_k$), one has by Lemma \ref{lmm:randomforce} that
\begin{multline}
 I_1=2\mathbb{E}\Bigg\{\Big[Y(t_k)-Y_{\tau}(t_k)+\int_{t_k}^t(g_1(Y(s))-g_1(Y_{\tau}(t_k)))\,ds \Big]
 \cdot   \\\Big[g_1(Y(t))+g_2(Y(t))- g_1(Y_{\tau}(t_k))-g_2(Y_{\tau}(t_k)) \Big]\Bigg\}\\
 +2\mathbb{E}\int_{t_k}^t \big[g_2(Y(s))-g_2(Y_{\tau}(t_k)) \big]\cdot \big[g_1(Y(t))-g_1(Y_{\tau}(t_k))\big]\,ds,
 \end{multline}

 In $I_1$, we need to estimate two type of terms. The first type is like
 \[
 J_1:=\mathbb{E}\big[Y(t_k)-Y_{\tau}(t_k)\big]\cdot\big[g_\ell(Y(t))-g_\ell(Y(t_k))\big],
 \]
 while the second type is like
 \[
 J_2:=\mathbb{E}\int_{t_k}^t \big[g_\ell(Y(s))-g_\ell(Y_{\tau}(t_k)) \big]\cdot \big[g_m(Y(t))-g_m(Y_{\tau}(t_k))\big]\,ds,
 \]
where $\ell,m=1$ or $2$.
Note that
 \begin{gather}
 dY= (g_1+g_2)(Y(t))\,dt
 +\sqrt{2 D} dW(t).
 \end{gather}
By It\^o's formula, one has
 \begin{gather}\label{eq:expandtheforce}
 g_\ell(Y(t))-g_\ell(Y(t_k))=g_\ell(Y(t_k))-g_\ell(Y(t_k))
 +\int_{t_k}^t dY\cdot \nabla g_\ell(Y(s))+\int_{t_k}^t \Delta g_\ell(Y(s)) D ds.
 \end{gather}
Hence,
 \[
 J_1\le C\|Y(t_k)-Y_{\tau}(t_k)\|^2+C\|Y(t_k)-Y_{\tau}(t_k)\|(1+D)\tau,
 \]
 where we used the fact that
 \[
 \mathbb{E}\big[Y(t_k)-Y_{\tau}(t_k)\big]\cdot \int_{t_k}^t dY\cdot \nabla g_\ell(Y(s))
 =\mathbb{E}\big[Y(t_k)-Y_{\tau}(t_k)\big]\cdot \int_{t_k}^t (g_1+g_2)\cdot \nabla g_\ell(Y(s))\,ds.
 \]
Using similar expansions, the terms like $J_2$ can be controlled as
 \[
 J_2\le C\|Y(t_k)-Y_{\tau}(t_k)\|^2\tau+C\|Y(t_k)-Y_{\tau}(t_k)\|(1+D)\tau^2+CD\tau^2+C(1+D^2)\tau^3.
 \]
Overall, we have,
 \[
 \begin{split}
 I_1 &\le C\|Y(t_k)-Y_{\tau}(t_k)\|^2+C\|Y(t_k)-Y_{\tau}(t_k)\|(1+D)\tau+CD\tau^2+C(1+D^2)\tau^3\\
 & \le C\|Y(t_k)-Y_{\tau}(t_k)\|^2+C(1+D^2)\tau^2.
 \end{split}
 \]

We now proceed the estimate of $I_2$.
 Using again \eqref{eq:expandtheforce},
 one obtains
 \begin{multline*}
 I_2\le \mathbb{E}\left|-\frac{1}{N-1}\sum_{j=1}^{N-1}\nabla f_j(Y(t_k))+\frac{1}{p-1}\sum_{j\in \bm{\xi}_k}\nabla f_k(Y_{\tau}(t_k)) \right|^2 \tau\\
 +C\left\|-\frac{1}{N-1}\sum_{j=1}^{N-1}\nabla f_j(Y(t_k))+\frac{1}{p-1}\sum_{j\in \bm{\xi}_k}\nabla f_k(Y_{\tau}(t_k))\right\|(1+D)\tau^2\\
+CD\tau^2+C(1+D^2)\tau^3.
 \end{multline*}
The first term is nothing but the variance of the random force, and thus Lemma \ref{lmm:randomforce} gives
 \[
 I_2\le 2\left(\frac{1}{p-1}-\frac{1}{N-1}\right)\|\Lambda\|_{\infty}\tau +CD\tau^2+C(1+D^2)\tau^3.
 \]
Hence, we eventually have
 \begin{multline*}
 \|Y(t_{k+1})-Y_{\tau}(t_{k+1})\|^2
 = \|Y(t_{k})-Y_{\tau}(t_{k})\|^2
 +\int_{t_k}^{t_{k+1}}(I_1(s)+I_2(s))\,ds \\
 \le (1+C\tau)\|Y(t_{k})-Y_{\tau}(t_{k})\|^2
+2\left(\frac{1}{p-1}-\frac{1}{N-1}\right)\|\Lambda\|_{\infty}\tau^2+C(1+D^2)\tau^2.
 \end{multline*}
 This implies that
 \[
 \begin{split}
 \|Y(t_{k})-Y_{\tau}(t_{k})\|^2 &\lesssim (e^{Ck\tau }-1)\left(\frac{1}{p-1}\|\Lambda\|_{\infty}
 \tau+(1+D^2)\tau^2\right)\\
 &\lesssim (e^{Ck\tau }-1)\left[\frac{1}{p-1}\|\nabla U_1\|_{\infty}
 \tau+(1+D^2)\tau^2\right].
 \end{split}
 \]
 By the definition of $W_2$ distance \eqref{eq:W2}, the claim follows.
\end{proof}

\subsection{The algorithm and some comments }

We now combine the ideas in section \ref{sec:split} and section \ref{sec:sderandombatch} into a practical algorithm, called the Random-batch Monte Carlo (RBMC) method (detailed in Algorithm \ref{fastmcmc}). In particular, we use the splitting strategy and decompose $U$ into a smooth part with long range and a singular part with short range.
For the smooth part, we apply the mini-batch idea for the SDE steps similar to SGLD while we use the singular part for the rejection.

\begin{algorithm}[H]
\caption{(Random-batch Monte Carlo algorithm)}\label{fastmcmc}
\begin{algorithmic}[1]
\State Split $U:=U_1+U_2$ such that $U_1$ is smooth and with long range and $U_2$ is with short range.
Randomly generate $N$ initial particles, set $N_s$ be the total number of samples and choose $p>1$, $m\ge1$
\For {$n \text{ in } 1: N_s$}
\State Randomly choose a particle $i\in \{1, \cdots, N\}$ with uniform probability
       \State $X^i \leftarrow x_i$
     \For {$k=1,\cdots, m$}
      \State  Choose $\bm{\xi}_k$, $z_k\sim \mathcal{N}(0, I_d)$, $\tau_k>0$ and let,
 \begin{gather*}
 X^i\leftarrow X^i-\tau_k \left[\frac{\nabla V(X^i)}{w(N-1)}+\frac{1}{p-1}\sum_{j\in \bm{\xi}_k} \nabla U_1(X^i-x_j)\right]+\sqrt{\frac{2\tau_k}{(N-1)w^2 \beta}} z_k
 \end{gather*}
      \EndFor
      \State Let $x_i^*\leftarrow X^i$. Compute the following using cell list or other data structures:
      \[
      \alpha=\min\left\{1, \exp\Big[-\beta\sum_{j: j\neq i}  w^2(U_2(x_i^*-x_j)-U_2(x_i-x_j))\Big]  \right\}
      \]
      \State Generate a random number $\zeta$ from uniform distribution on $[0, 1]$. If $\zeta\le \alpha$, set
      \[
      x_i \leftarrow x_i^*
      \]
\EndFor
\end{algorithmic}
\end{algorithm}

Compared to the traditional Metropolis algorithm \cite{hastings1970monte}, our new algorithm \ref{fastmcmc} has the following advantages:
\begin{itemize}
\item The computational cost is $O(1)$ for each iteration;
\item There is no rejection in {\it Step 2}. The efficiency could be higher.
\end{itemize}

When we run the implementations, there are two phases for the algorithms. Phase 1 is the burn-in phase, in which the distribution of the Markov chain converges the desired Gibbs distribution. Phase 2 is the sampling phase. We run the algorithms for many iterations and collect these $N$ points for selected iterations as the sampling points.  As the iteration goes on, the number of sampling points gets larger and larger.

Below, we perform some discussions.
\begin{enumerate}
\item
For the molecular regime, where $w=1$.  Taking $\gamma=1$ in \eqref{eq:langevinsampling}
and using the same splitting, one may have the following SDE
\begin{gather}\label{eq:SDEaux1}
dX^i=-\sum_{j:j\neq i}\nabla U_1(X^i-x_j)\,dt+\sqrt{\frac{2}{\beta}}\,dW^i.
\end{gather}
If one uses mini-batch idea,  the random force then becomes
\[
-\frac{N-1}{p-1}\sum_{j\in \bm{\xi}_k} \nabla U_1(X^i-x_j).
\]
For this type of forcing term, the magnitude can be as large as $O(N)$, and one needs $\tau_k\sim O(1/N)$ for the numerical methods to be stable. This is not preferred in practice.  Taking $\gamma=1/(N-1)$ corresponds to nothing but rescaling the time with $\tilde{t}=(N-1)t$.

\item
 If we move $N$ particles, then {\it Step 2} can be solved using the RBM in \cite{jin2018random} to reduce the cost. However, moving $N$ particles will make the following event to have high probability
 \[
 A=\{\exists (i, j) \text{~such that } |x_i^*-x_j^*|\le \delta*\text{mean distance}\}
 \]
 This event will be rejected with high probability in {\it Step 3}. Hence, moving $N$ particles is not preferred in practice.
\end{enumerate}

\section{Numerical examples}\label{sec:numerical}

In this section, we test our RBMC sampling algorithm on two examples, and compare to the Metropolis-Hastings algorithm (see Algorithm \ref{alg:mh}), which is a classical MCMC method. The first example is the interacting particle system from the Dyson Brownian motion with the Gibbs measure to be the semicircle law in the $N\to\infty$ limit. The second example is the Lennard-Jones fluids and we aim to recover the equation of states $P=P(\rho)$, where $P$ is the pressure and $\rho$ is the density. Both examples have singular interaction potentials. The difference is that the first example has a long range while the second example has a short range.  Even for short range potentials, our algorithm can still be applied.

Below, an ``iteration'' will refer to one complete move of a particle, including the rejection step. In other words, this will correspond to the loop of $n$ in Algorithm \ref{fastmcmc} and Algorithm \ref{alg:mh}.
All the numerical experiments are implemented using C++ and performed in a Windows system with  Intel Core i7-6700HQ CPU @ 2.60GHz and  8 GB memory.

\begin{algorithm}[H]
	\caption{(Metropolis-Hastings algorithm )}\label{alg:mh}
	\begin{algorithmic}[1]
		\For {$n \text{ in } 1: N_s$}
		\State Randomly choose a particle $i\in \{1, \cdots, N\}$ with uniform probability
		\State Move $x_i$ with some random movement and obtain $x_i^*$
		\State Compute the following using cell list or other data structures:
		\[
		\alpha=\min\left\{1, \exp\Big[-\beta w(V(x_i^*)-V(x_i))-\beta\sum_{j: j\neq i}    w^2(U(x_i^*-x_j)-U(x_i-x_j))\Big]  \right\}
		\]
		\State Generate $\zeta\sim U[0, 1]$. If $\zeta\le \alpha$, set
		\[
		x_i\leftarrow x_i^*
		\]
		\EndFor
	\end{algorithmic}
\end{algorithm}

\subsection{Dyson Brownian motion}\label{subsec:db}
To test our method, we first focus on an interacting particle system from Dyson Brownian motion arising from random matrix theory \cite{tao2012,erdos2017}. Consider a Hermitian matrix valued Ornstein-Uhlenbeck process
\begin{gather}
dA=-A\,dt+\frac{1}{\sqrt{N}}dB,
\end{gather}
where the matrix $B$ is a Hermitian matrix whose diagonal elements are independent standard Brownian motions while the off-diagonal elements in the upper triangular half are of the form $(B_R+i B_I)/\sqrt{2}$, with $B_R$ and $B_I$ being  independent standard Brownian motions. It can be shown \cite{tao2012,erdos2017} that the eigenvalues of $A$ satisfy the following  system of SDEs ($1\le j\le N$), called the Dyson Brownian motion:
\begin{gather}
d\lambda_j(t) =-\lambda_j(t)\,dt+\frac{1}{N}\sum_{k: k\neq j}\frac{1}{\lambda_j-\lambda_k}dt
+\frac{1}{\sqrt{N}} dW_j,
\end{gather}
where $\{W_j\}$'s are independent standard Brownian motions and we use $W_j$ instead of $B_j$  to be consistent with the notations throughout this paper. The density of these eigenvalues in the limit $N\to\infty$ is given by
\begin{gather}\label{eq:dysonlimiteq}
\partial_t\rho(x,t)+\partial_x(\rho(u-x))=0, ~~u(x, t)=\pi(H\rho)(x, t)=\mathrm{p.v.}\int_{\mathbb{R}}\frac{\rho(y,t)}{x-y}\,dy,
\end{gather}
where $H(\cdot)$ is the Hilbert transform on $\mathbb{R}$, $\pi=3.14\cdots$ is the circumference ratio and p.v. means the integral is evaluated using the Cauchy principal value.
The corresponding limiting equation \eqref{eq:dysonlimiteq} has an invariant measure, given by the semicircle law:
\begin{gather}\label{eq:semicircle}
\rho(x)=\frac{1}{\pi}\sqrt{2-x^2}.
\end{gather}
Note that the mean field limit \eqref{eq:dysonlimiteq} is the same for any SDE of the following form
 \begin{gather}\label{eq:dysonBMips}
d\lambda_j(t) =-\lambda_j(t)\,dt+\frac{1}{N-1}\sum_{k: k\neq j}\frac{1}{\lambda_j-\lambda_k}dt
+\sqrt{\frac{2\mu}{N-1}} dW_j.
\end{gather}
We changed the prefactor $1/N$ into $1/(N-1)$ for convenience.

For the system \eqref{eq:dysonBMips}, clearly we have
$w=1/(N-1)$ and the physical temperature in this case is given by
$T=\mu/(N-1)^2$, and thus $\beta=(N-1)^2/\mu$. The corresponding Gibbs distribution is
\[
\pi \propto \exp\left[-\beta\Big(\frac{1}{2(N-1)}\sum_ix_i^2-\frac{1}{(N-1)^2}\sum_{i,j: i<j}\ln|x_i-x_j|\Big)\right].
\]
The binary potential is given by
\begin{gather}
U(x_i-x_j)=u(|x_i-x_j|)=-\ln|x_i-x_j|.
\end{gather}
We consider $N=500$ particles sampled from the uniform distribution on $[-5,5]$ and run Algorithm \ref{fastmcmc}.
Let $r=|x_i-x_j|.$
We consider the splitting $U=U_1+U_2$ with $U_{1,2}(x)=u_{1,2}(|r|)$ where
\begin{eqnarray}\label{eq:dysonsplit1}
u_1(r)=
\begin{cases}
\ln(100)-100r+1    &0<r<0.01,\\
-\ln(r)    &r\geq 0.01,
\end{cases}
\end{eqnarray}
and
\begin{eqnarray}\label{eq:dysonsplit2}
u_2(r)=
\begin{cases}
-\ln(100r)+100r-1    &0<r<0.01,\\
0    &r\geq 0.01.
\end{cases}
\end{eqnarray}
Clearly, $U_1(x_i-x_j)=u_1(r)$ is the interaction extending to $r=0$ along the tangent line at $r=0.01$.
The short-range potential $u_2(r)$ is nonzero only if $r<0.01$, so we use cell list to calculate $U_2(x_i-x_j)$ in the nearby boxes of a particular particle with length $0.01$.  The reason to choose $0.01$ as the splitting point is that we hope that there are only a few particles in the range of $U_2$ for a chosen particle.

We fix the parameter $\mu=1$. The batch and step sizes $p=2$ and $m=9$ are taken so that
the SDE step is given by
\begin{gather}
X^i \leftarrow X^i-\big[X^i+\nabla U_1(X^i-x_{\eta_k})\big]\tau+\sqrt{\dfrac{2\mu}{N-1}}\sqrt{\tau}z_k,
~~k=0,\cdots, m-1,
\end{gather}
where $\eta_k$ is chosen from $\{1, 2, \cdots, N\}\setminus\{i\}$ with uniform probability. The time step is chosen as
$
\tau= 10^{-4}
$
for all the experiments in this example. This small time step is needed largely due to the large magnitude of the gradient for $U_1$ in $r<0.01$ region.

\begin{table}
       \centering
	\begin{tabular}{|c|c|c|}
		\hline
		&Iteration& Time (s)\\\hline
		MH&3e5&36.8 \\\hline
		RBMC&3e6&14.5 \\\hline
	\end{tabular}
\caption{Burn-in phase of the two methods. First 3e5 iterations in MH method are discarded, which costs $36.8$s. First 3e6 iterations are discarded in RBMC, which costs $14.5$s.}
\label{tabl:burninDB}
\end{table}

As we have mentioned, there are two phases: the burn-in phase and sampling phase. Table \ref{tabl:burninDB} shows the burn-in phase we pick for the two methods in one experiment. The first $3\times 10^5$ sampling iterations in the MH method and the first $3\times 10^6$ sampling iterations in the RBMC method are discarded as the distribution is far away from the semicircle law. Though the number of iterations is larger, our proposed method costs only $14.5s$ compared to $36.8s$ in MH, implying that our method is more efficient.

\begin{figure}[ht]	
	\centering
	\includegraphics[width=0.6\textwidth]{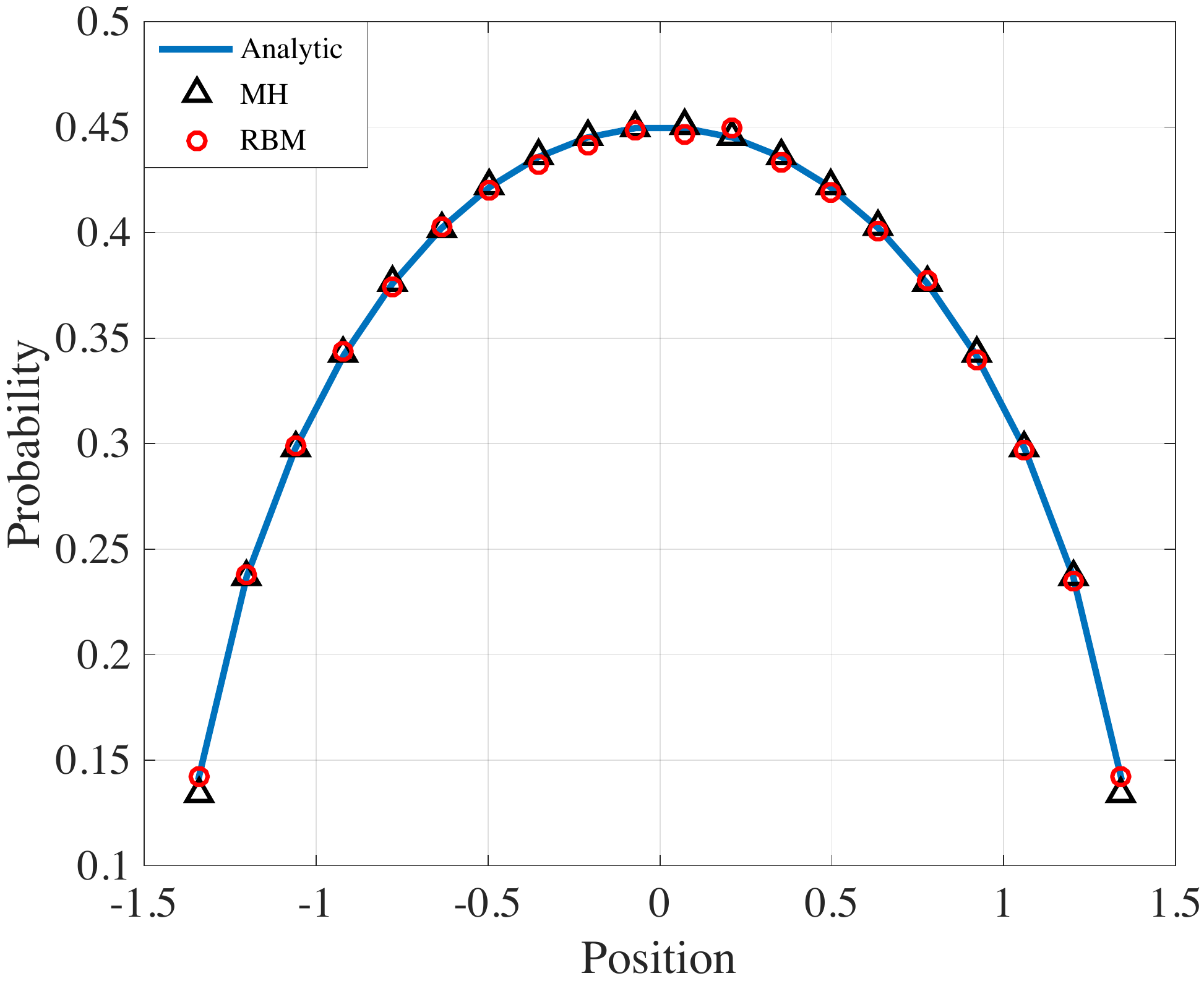}
	\caption{Empirical densities obtained by the two methods with 1e7 sampling iterations (1e7$N$ sample points). The blue curve represents the semicircle law \eqref{eq:semicircle}.}
	\label{fig:dyson}
\end{figure}

In Fig. \ref{fig:dyson}, we show the performances of the two methods, with the samples collected from the sampling phase. The empirical densities are computed using bin counting. In particular, we divide the interval $[-\sqrt{2}, \sqrt{2}]$ into $20$ equal pieces so that the width of each bin is $h=\sqrt{2}/10$. Define $x_i=-\sqrt{2}+ih$. Then, the density in the $i$th bin $[x_{i-1}, x_{i})$ is approximated by
$
\rho_i\approx  N_i/(N h).
$
The samples are collected from all the iterations after the burn-in phase (each iteration contributes $N=500$ sample points).
In particular, let $N_i$ denote the number of particles in the bin $[x_{i-1}, x_{i})$. In each iteration $k$, we do $N_i\leftarrow N_i+N_i^k$, where $N_i^k$ is the number of points in the $i$th bin at the $k$th iteration. At last, normalize $N_i$: $N_i=N_i/\sum N_j$. Clearly, Fig. \ref{fig:dyson} shows that both methods capture the semicircle law \eqref{eq:semicircle}.

\begin{table}
	\begin{tabular}{|c|c|c|c|c|c|c|c|c|}
		\hline
		Iteration&1e5&2e5&5e5&1e6&2e6&5e6&1e7&2e7\\\hline
		Time(MH)(s)&20.5&36.3&88.2&188.3&314.9&750.1&1384.9&2952.7 \\\hline
		Time(RBMC)(s)&2.4&2.5&3.8&4.6&13.2&30.7&62.0&126.5 \\\hline
		Error(MH)&0.035&0.017&0.0060&0.0038&0.0023&0.0016&0.0015&0.0014 \\\hline
		Error(RBMC)&0.012&0.011&0.0062&0.0051&0.0048&0.0031&0.0028&0.0022	\\\hline
	\end{tabular}
\caption{Sampling times and errors for two methods. }
\label{tabl:dstime}
\end{table}

To compare the performance of the algorithms in more details, in Table \ref{tabl:dstime}, we compare the running times (spent in the sampling phase only) and relative errors of the two methods as the iteration goes on.  Here, the iteration means the number of iterations in the sampling phase and the relative error is computed by the $L^1$ norm
\[
E:=\sum_{i}\left|\rho_i-h^{-1}\int_{x_{i-1}}^{x_i}\rho(x)\,dx\right| h \Bigg/\int_{-\sqrt{2}}^{\sqrt{2}} \rho(x)\,dx
=\sum_{i}\left|\rho_i-h^{-1}\int_{x_{i-1}}^{x_i}\rho(x)\,dx\right| h.
\]
We also plot the running time versus the Monte Carlo errors in Fig. \ref{fig:DBtime-err}. The data in time-error plots in our figure are collected at different iterations in a single experiment with fixed $N$. A more reasonable comparison would be based on the data collected from different $N$ values and the time should be for the distribution to reach the equilibrium. This will be done in our subsequent work for the interactions with long ranges like the Coulomb interaction.

As can be seen in Table \ref{tabl:dstime} and Fig. \ref{fig:DBtime-err}, our proposed method is more efficient if the desired
accuracy is not very high. The reason is that our proposed method takes only $O(1)$ time per iteration.
The overall running time in the RBMC method is roughly $1/20$ of that in the MH method for obtaining the same number of samples. As can be seen in Fig. \ref{fig:DBtime-err}, to achieve the same sampling error, the computation time needed in the RBMC method is like  $1/10$ of that needed in the MH method.

\begin{figure}[ht]	
	\centering
	\includegraphics[width=0.6\textwidth]{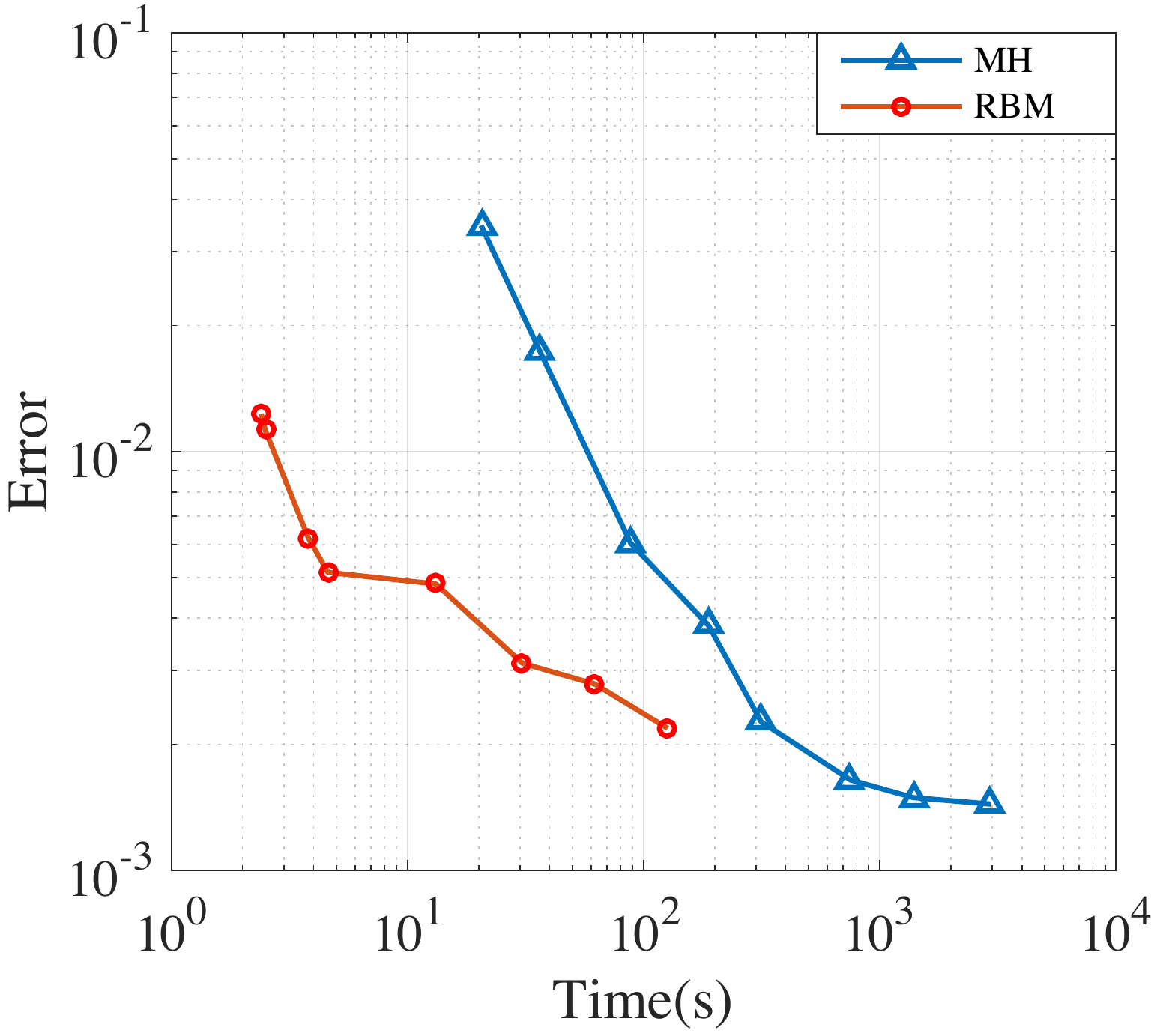}
	\caption{Relative error of two methods decreases as time goes on. In RBMC method, we only need $10\%$ time of MH method to get the error with the same level.}
	\label{fig:DBtime-err}
\end{figure}

\subsection{Lennard-Jones fluids}\label{subsec:ljnumerics}

In this subsection, we test our algorithm on the Lennard-Jones fluid, aiming to recover the equation of state,
which has been studied in \cite{johnson1993lennard}.
We compare the sampling performance of the standard MH algorithm and our proposed method.

For the Lennard-Jones fluids, we take the external potential $V\equiv 0$, and consider the regime for the molecular dynamics so that $w=1$. Hence, the Gibbs measure is given by
\[
\pi \propto \exp\left[-\beta \sum_{i,j: i<j}U(x_i-x_j)\right].
\]
The potential for this model system is given by:
\begin{align}\label{eq:ljpotential}
U(x_i-x_j)=u(r_{ij})=4\left[\left(\dfrac{1}{r_{ij}}\right)^{12}-\left(\dfrac{1}{r_{ij}}\right)^6\right]
\end{align}
where again
$r_{ij}=|x_i-x_j|$
is the distance between three-dimensional locations $x_i$ and $x_j$. For the convenience, we will also define
$
\vec{r}_{ij}:=x_i-x_j.
$

In numerical simulations, the periodic boxes are often used to approximate the fluid. Assume the length of the periodic box is $L$.  A given particle $x_i$ interacts with all other particles in the system and their periodic images.  We pick a cutoff length
$r_c=L/2.$
Thanks to the fast decay property of the Lennard-Jones potential \eqref{eq:ljpotential}, for a fixed particle, one can approximate the summation away beyond $r_c$ by continuous integral, and the radial density function (see \cite[Chap. 3]{frenkel2001understanding}) is approximated by
$\rho(r)\approx N/L^3$ when $r\ge r_c$.

With the approximation, the pressure formula is given by
\begin{align}\label{eq:pressureforperiodic}
P=\dfrac{\rho}{\beta}+\frac{8}{V}\sum_{i=1}^N \sum\limits_{j: j>i,\tilde{r}_{ij}<r_c}(2\tilde{r}_{ij}^{-12}-\tilde{r}_{ij}^{-6})+\dfrac{16}{3}\pi\rho^2 \left[\dfrac{2}{3}\left(\dfrac{1}{r_c}\right)^9-\left(\dfrac{1}{r_c}\right)^3\right],
\end{align}
and  the acceptance rate in the MH algorithm (Algorithm \ref{alg:mh}) can be approximated by
\begin{gather}\label{eq:mhrate}
\alpha=\min\left\{1, \exp\left[-\beta\Big( \sum_{j: j\neq i, \tilde{r}_{ij}^*< r_c}   u(\tilde{r}_{ij}^*)-\sum_{j: j\neq i, \tilde{r}_{ij}< r_c} u(\tilde{r}_{ij})\Big)\right]  \right\}.
\end{gather}
Here, we have introduced
\[
\tilde{r}_{ij}=|\vec{r}_{ij}+\vec{n}L|
\]
for some suitably three-dimensional integer vector $\vec{n}$ so that
$|\vec{r}_{ij}+\vec{n}L|$ is minimized. In fact, since $r_c=L/2$, for each $j$, there is at most one image (including itself) that falls into $B(x_i, r_c)$, so the formula is fine.  Clearly,  $\tilde{r}_{ij}^*$ is similarly defined just with $x_i$ replaced by $x_i^*$, the position of the candidate move. Regarding Algorithm \ref{fastmcmc}, for a particle, under the same approximation, the net forces of all the particles/images away beyond $r_c$ contribute net zero force to the current particle.
Equivalently, we think that each of these images contributes zero force to the current particle. Hence, in the SDE step of Algorithm \ref{fastmcmc}, only when the randomly selected particle/image (select $j\in \{1,\cdots, N\}\setminus \{i\}$ randomly and then determine the $\vec{n}$ vector such that $|x_i-x_j+\vec{n}L|$ is minimized) is within $B(X^i, r_c)$ will we compute its contribution and update $X^i$; otherwise, we do not update $X^i$ but still increase $k$ until it reaches $m$. For the rejection step, since $U_2$ is short-ranged, we also only consider images within $B(X^i, r_c)$.
We refer the readers to Appendix \ref{app:pressureandenergyLJ} for the details of these formulas.

Following  \cite{frenkel2001understanding}, we take the number of particles $N=500$
and the length of the periodic boxes will be determined by the desired density correspondingly
\begin{gather}
L=\left(\frac{N}{\rho} \right)^{1/3}.
\end{gather}
For the MH method (Algorithm \ref{alg:mh}), the random displacement for a randomly selected particle obeys the
$\mathcal{N}(0, \sigma^2 I_3)$ with $\sigma=0.05 r_c.$
In other words,
\begin{gather}
x_i^{\ast}= x_i+\sigma z,~z\sim \mathcal{N}(0, I_3).
\end{gather}

For our proposed algorithm (Algorithm \ref{fastmcmc}), we do splitting
\begin{gather}
u(r)=u_1(r)+u_2(r)
\end{gather}
where
\begin{eqnarray}
u_1(r)=
\begin{cases}
	2^{-1/3}(r-\sqrt[6]{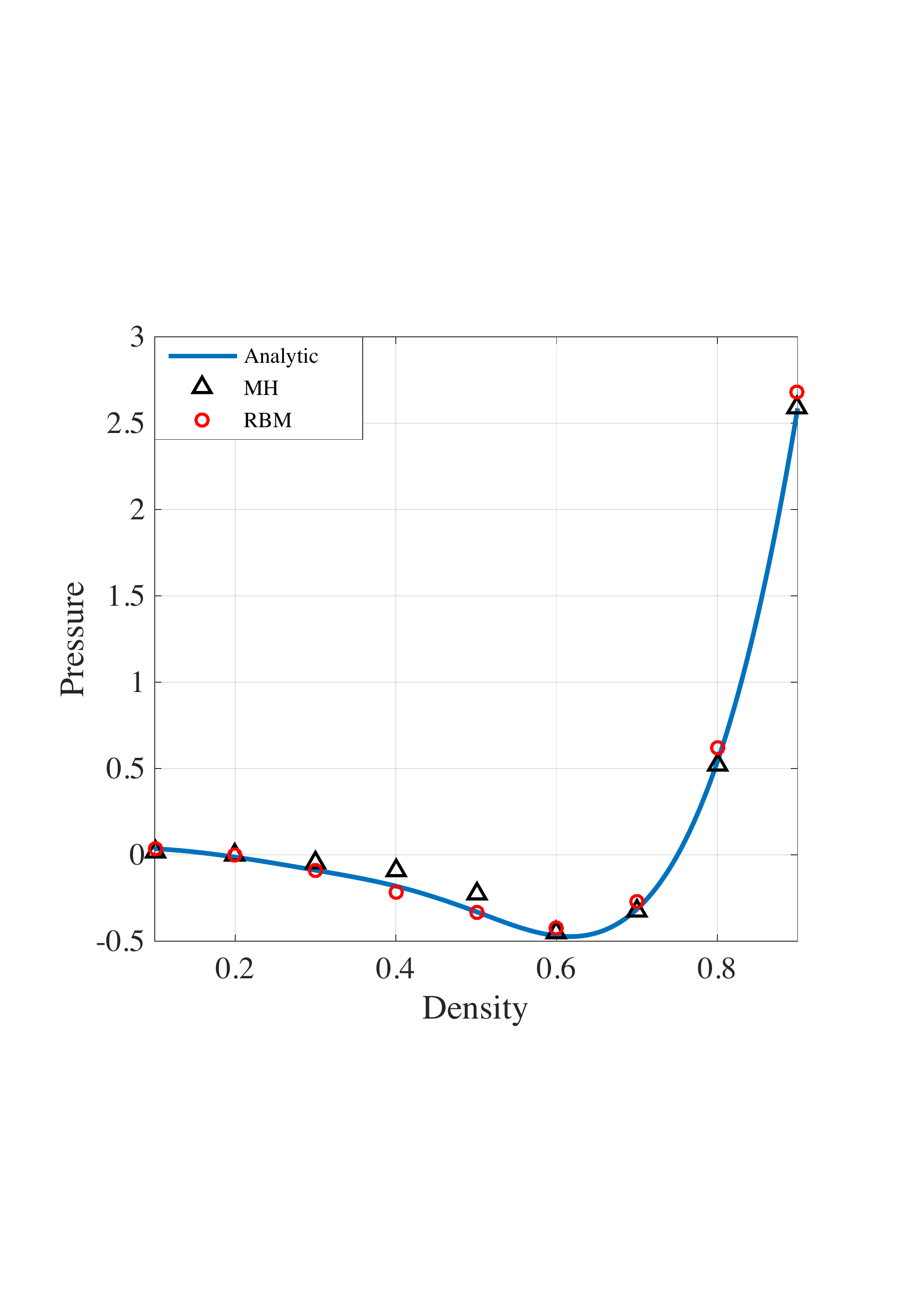})^2-1,   &0<r<\sqrt[6]{2},\\
	4\times(r^{-12}-r^{-6}) ,   &r\geq \sqrt[6]{2},
\end{cases}
\end{eqnarray}
and
\begin{eqnarray}
u_2(r)=
\begin{cases}
4\times(r^{-12}-r^{-6})-2^{-1/3}(r-\sqrt[6]{2})^2+1 ,   &0<r<\sqrt[6]{2},\\
0,    &r\geq \sqrt[6]{2}.
\end{cases}
\end{eqnarray}
In the SDE step, the batch size is chosen as $p=2$,
and we take $m=9$ so that the iteration in the SDE step is given by
\begin{align}
X^i \leftarrow X^i-\nabla U_1(X^i-x_{\eta_k})\tau+\sqrt{\dfrac{2T}{N-1}}\sqrt{\tau} z_k,~~k=0,\cdots, m-1,
\end{align}
where $\eta_k$ is chosen from $\{1, 2, \cdots, N\}\setminus\{i\}$ with uniform probability. The time step is chosen as
$
\tau=0.01
$
for all the experiments in this example.

\begin{table}
       \centering
	\begin{tabular}{|c|c|c|}
		\hline
		&Iteration& Time (s)\\\hline
		MH&5e4&182.6 \\\hline
		RBMC&2e5&75 \\\hline
	\end{tabular}
\caption{Burn-in phase. First 5e4 iterations in MH method are disgarded, which costs $182.6$s. First 2e5 iterations are disgarded in RBMC, which costs $75$s.}
\label{tabl:burninLJ}
\end{table}

Table \ref{tabl:burninLJ} shows the burn-in phase we pick for the two methods in one experiment. Clearly, our method is again more efficient to achieve the thermal equilibriums.
In Fig. \ref{fig3}, we show the results for these two methods with samples collected from the sampling phase. Clearly they both have good convergence and ergodicity performance. Fig. \ref{fig3} (a) shows the results for $T=2$, while  Fig. \ref{fig3} (b) shows the results for the lower temperature $T=0.9$. The analytical solution is given by the fitting curves in \cite{johnson1993lennard}.

\begin{figure}[h]
	\centering
	\includegraphics[width=0.9\textwidth]{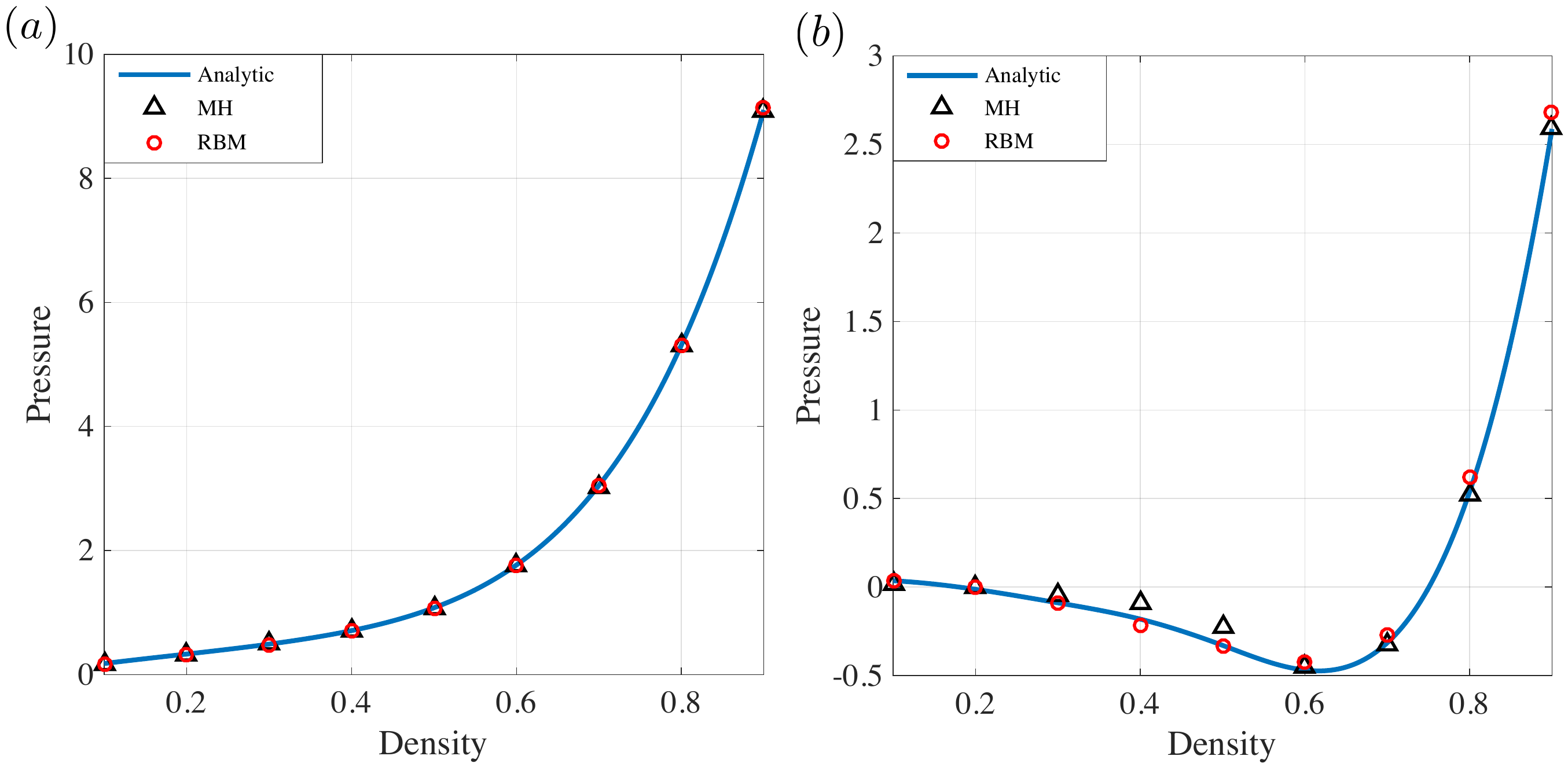}
	\caption{Average results of 2e7 sampling iterations for two methods compared with analytic results. (a) $T=2$; (b) $T=0.9$. }
	\label{fig3}
\end{figure}

For $T=2$, the system stays vapor for all densities while in the case of $T=0.9$, there is
fluid-vapor coexistence for $\rho<0.75$ (when $\rho>0.75$ the system should be in the liquid state).
According to the figure, our proposed algorithms behave similarly to the MH algorithm. They can capture the equation of state correctly.  For $T=2$, the result is quite good and the eventual relative error is like $0.5\%$ and $0.2\%$ relatively as have been mentioned above. For $T=0.9$ though the simulation result is not extremely excellent, the curves have been correctly captured.
The relative errors are about $4\%-5\%$. The reason for the relatively poorer performance when the temperature is low ($T=0.9$) is that the acceptance rates for both methods become smaller. When $\rho=0.4,0.5$, the MH seems to be poorer. In fact, acceptance rates in the MH method are small for these ranges of densities. When $\rho>0.7$ in the RBMC method, the new sampling points generated by the SDE step are more likely to be rejected by the short range potential in Step 2.

\begin{figure}[h]
	\centering
	\includegraphics[width=0.6\textwidth]{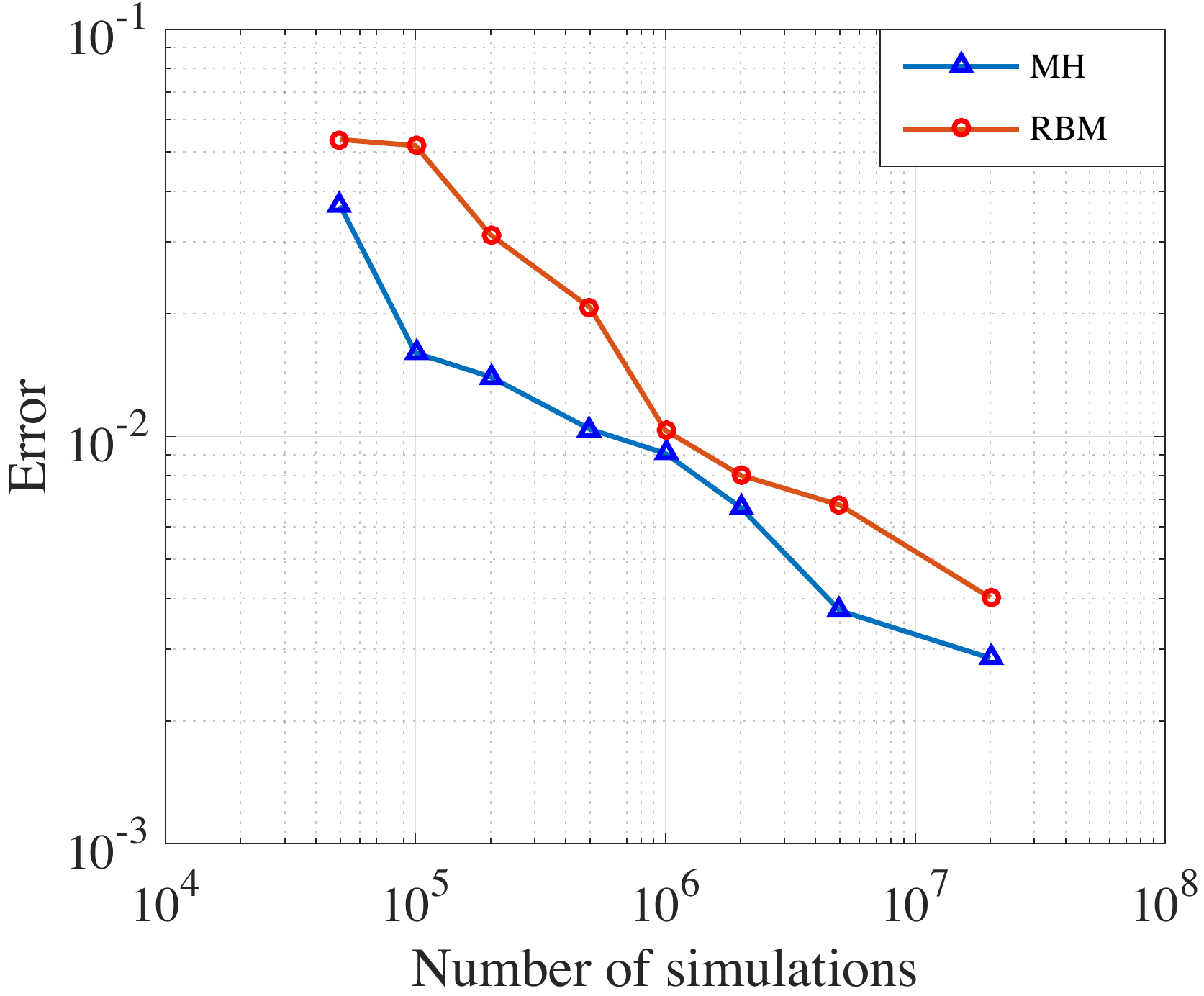}
	\caption{$T=2$. Relative errors for the two methods decrease as the iteration goes on (and thus number of sample points increases). }
	\label{fig4}
\end{figure}

In Fig. \ref{fig4}, we show how the errors change as the iteration goes on in the sampling phase for $T=2$, where the relative error is calculated using $\ell^2$ norm as
\begin{align}
E=\sqrt{\sum\limits_{i=1}\limits^{9}\frac{1}{9}(P_i-P(\rho_i))^2} \Big/\sqrt{\sum_{i=1}^9 \frac{1}{9}P(\rho_i)^2}.
\end{align}
Here, $P_i$ represents the numerical solution while $P(\cdot)$ is the analytical solution given by the fitting curves in \cite{johnson1993lennard}. The samples are also taken from the iterations after the burn-in: in each iteration, we compute the pressure using \eqref{eq:pressureforperiodic} and use all the pressures in the history up to the current iteration to calculate the empirical mean, which is used as the numerical pressure, even though some of them are rejected. The rejected points are necessary for satisfying detailed balance condition.
 According to the picture, the MH method has smaller error than the RBMC method in the same number of simulations. In long run, the relative error of RBMC method is about $0.5\%$, but in the MH method it is about $0.2\%$. The error in two methods decreases as $1/\sqrt{N}$, which satisfies the convergence rate in Monte Carlo method.

\begin{figure}[h]
	\centering
	\includegraphics[width=0.6\textwidth]{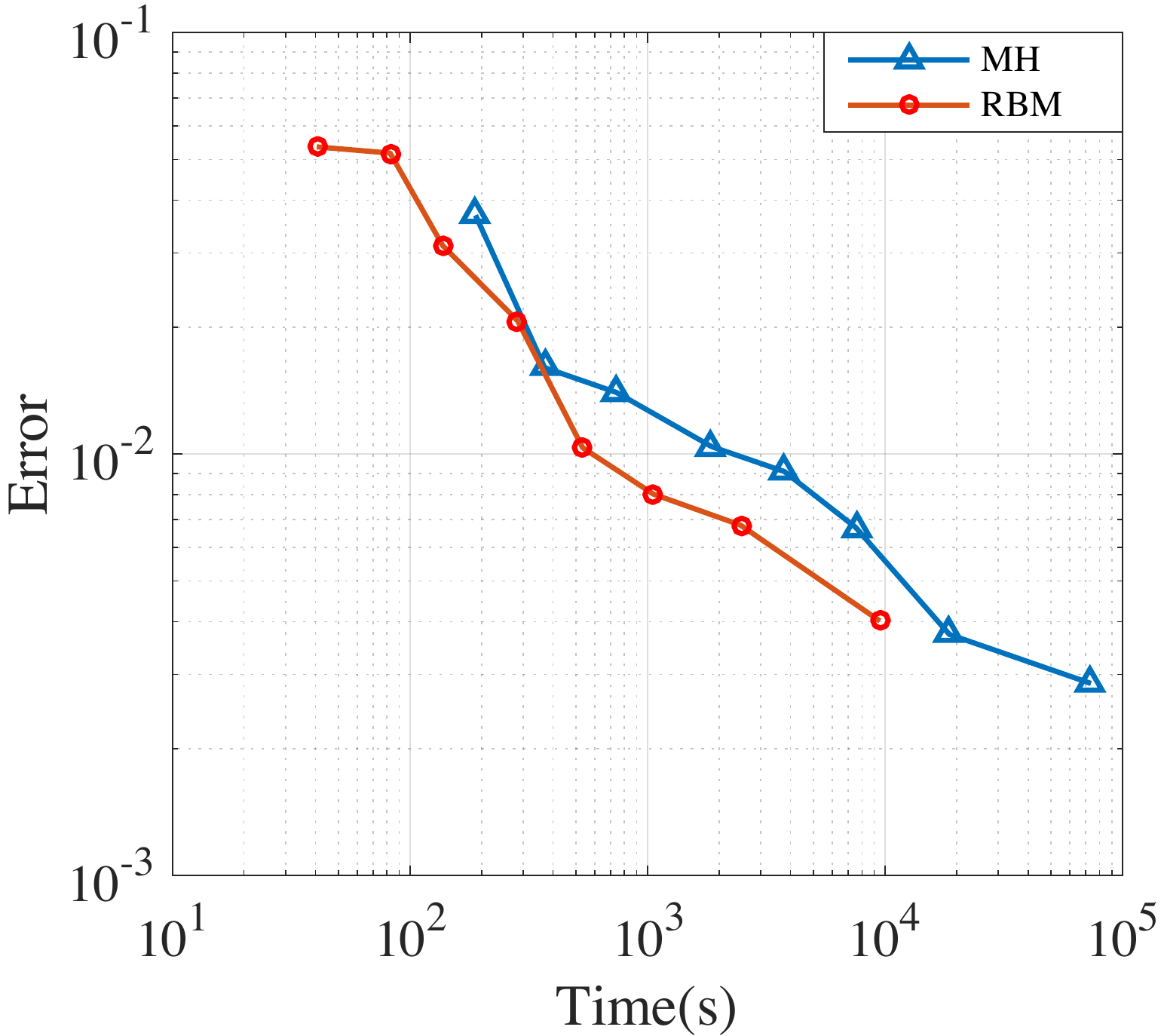}
	\caption{Relative error of densities in $L^1$ norm. For RBMC method, we only need $1/3$ time of MH method to get the accuracy in the same level.}
	\label{fig5}
\end{figure}

\begin{table}
	\begin{tabular}{|c|c|c|c|c|c|c|c|c|}
		\hline
		Iteration&5e4&1e5&2e5&5e5&1e6&2e6&5e6&2e7\\\hline
		Time of MH&188.8&369.0&740.9&1861.0&3767.3&7528.6&18656.4&73557.4
		\\\hline
		Time of RBMC&40.5&82.5&137.3&283.9&533.2&1059.6&2463.0&9509.0
		\\\hline
		Error of MH&0.037&0.016&0.014&0.010&0.0090&0.0067&0.0037&0.0029 \\\hline
		Error of RBMC&0.054&0.052&0.031&0.021&0.010&0.0080&0.0068&0.0040	\\\hline
	\end{tabular}
	\caption{Running times and errors for the two methods in the sampling phase. Compared to the MH,  the RBMC needs twice number of iterations to give comparable sampling errors, but the sampling time is much less, such as 2e6 iterations in the MH method and 5e6 iterations in the RBMC method.}
	\label{tabl:LJtime}
\end{table}

To compare the efficiency of the two methods, we again present the relative errors and running times in Table \ref{tabl:LJtime} and plot the errors versus the running times in Fig. \ref{fig5}. The time here is the time in the sampling phase only while the relative errors are computed using the data collected from the sampling phase only.
Here are some observations, to achieve the same sampling error,  our proposed method roughly needs twice of the iterations compared to the MH method, but takes about $1/3$ time of the MH method. Since our method is an approximation method with systematical error, when the number of iterations gets large (more sample points), the sampling error will not decrease to zero; hence our method is more efficient in the early stage.

It seems that the time saving in this example is not that significant as in the Dyson Brownian motion example. The reason
is that in our implementation of cell list, we check all the 27 neighbor cells which is not quite efficient so our proposed method does not save time as significant as we desire. We expect that when $N$ is larger, our algorithm can be more efficient. As mentioned above in section \ref{subsec:db}, a more reasonable comparison would be based on the data collected from different $N$ values and the time should be computed for the distribution to reach the equilibrium. Since our aim in this paper is to propose an algorithm and we have seen it already owns some advantages in the time efficiency, we choose not to do this careful comparison. The careful and detailed comparison of our algorithm to the MH algorithm will be performed in our subsequent work for the interactions with long ranges like the Coulomb interactions, which is more interesting. (After all, the Lennard-Jones potential still has short range of interaction.)
Besides, during our experiments, we observed that the acceptance rates of the RBMC method are like $75\%-95\%$ while the acceptance rates in the MH are like $20\%-50\%$, which also indicates that our proposed method is more efficient.

It is remarked that when $N=500$, the speedup of our algorithm for the Lennard-Jones potential is comparable to the speedup made by MTS-MC \cite{hetenyi2002multiple}. Since our algorithm is $O(1)$ per iteration, we expect that the CPU time will increase linearly in $N$ and thus the speedup could be larger for larger $N$. MTS-MC has roughly the same behavior, but we have theoretic analysis for our algorithm ($O(1)$ per move). One more thing, as we have pointed out, the data structure for the cell list is not optimized in our implementation, and the speedup can be increased if it is improved.

\section{Conclusions} \label{sec:conc}

In summary, a RBMC method is developed for sampling from Gibbs distributions of interacting particle systems with
singular kernels. The coupling of the random batch and the splitting approaches in the RBMC leads us to a highly efficient sampling method.
The splitting of the interaction kernels allows the use of the random batch to obtain a candidate sample with $O(1)$ operations,
and the acceptance-rejection strategy which avoids the long-term relaxation in the case of a particle moving to
repulsive region of a neighboring particle and ensures the detailed balance condition. Since the calculation of the acceptance ratio is only related to the interaction
between a particle and its neighboring particles, the cost remains to be $O(1)$. Theoretical error estimate is present for
the accuracy of the RBMC method.

Numerical examples of the Dyson Brownian motion and the Lennard-Jones fluids demonstrate that the RBMC is
promising in simulating particle systems.
It is remarked that the Lennard-Jones potential itself is of short range in the sense that the $1/r^6$ kernel decays rapidly and
the cutoff can be used. Nevertheless, our numerical results have already shown that the RBMC is faster than the traditional
Metropolis-Hastings sampling using the cutoff scheme. The RBMC will be certainly more advantageous for particle systems with
longer-range interaction such as the Coulomb potential which is essential in many areas of applications. This extension will be studied
in the future.

\section*{Acknowledgement}
The work of L. Li was partially sponsored by NSFC 11901389, Shanghai Sailing Program 19YF1421300 and NSFC 11971314.
The work of Z. Xu and Y. Zhao was partially supported by  NSFC (grant Nos. 11571236 and 21773165) and the HPC center of Shanghai Jiao Tong University.
The authors acknowledge the useful discussions with Professors Shi Jin and Jian-Guo Liu.

\appendix

\section{Proof of Lemma \ref{lmm:detailedbalanceSDE}}\label{app:proof1}

The first claim regarding invariant measure is trivial. Let us only verify the detailed balance formally.
Consider the general SDE:
\[
dX=b(X)\,dt+\sigma(X)dW,
\]
where $\sigma$ is a $d\times m$ matrix while $W$ is an $m$-dimensional standard Wiener process.

The law of $X$ (or the density of the law of $X$), $p(x, t)$, satisfies the Fokker-Planck equation \cite{risken1996fokker}:
\begin{gather}\label{eq:fp}
\partial_t p=-\nabla\cdot(b p)+\frac{1}{2}\sum_{i,j=1}^d\partial_{ij}(\Lambda_{ij}p)=:\mathcal{L}^*p,
\end{gather}
where $\Lambda=\sigma\sigma^T$ and the operator
\[
\mathcal{L}^*=-\nabla\cdot(b \cdot)+\frac{1}{2}\sum_{i,j=1}^d\partial_{ij}(\Lambda_{ij} \cdot)
\]
is the dual operator of the generator of the SDE \cite[Theorem 7.3.3]{oksendal03} given by
\[
\mathcal{L}:=-b\cdot\nabla+\frac{1}{2}\Lambda:\nabla^2.
\]

Let $p(\cdot, t; x_i)$ be the Green's function which is the solution of Eq. \eqref{eq:fp} with
initial condition $p(x, 0; x_i)=\delta(x-x_i)$ and gives the transition density starting from $x_i$:
\[
p(\cdot, t; x_i)= e^{t\mathcal{L}^*}\delta(\cdot-x_i).
\]

The detailed balance condition is therefore the distributional identity
\[
\pi_1(x_i)\mathcal{L}_{z}^*\delta(z-x_i)=\pi_1(z)\mathcal{L}_{x_i}^*\delta(x_i-z)
\]
We pick a test function $\varphi$. It can be shown easily that
\[
\langle \pi_1(x_i)\mathcal{L}^*\delta(\cdot-x_i), \varphi(\cdot)\rangle
=\pi_1(x_i)\mathcal{L}\varphi(x_i),
\]
and that
\[
\langle \pi_1(z)\mathcal{L}_{x_i}^*\delta(x_i-\cdot),\varphi(\cdot)\rangle
=\mathcal{L}_{x_i}^*(\varphi(x_i) \pi_1(x_i)).
\]

Hence, to verify the detailed balance condition, one needs the following
\begin{gather}
\pi_1(x_i)\mathcal{L}\varphi(x_i)=\mathcal{L}_{x_i}^*(\varphi(x_i) \pi_1(x_i)),
\end{gather}
which is reduced to
\[
-\pi_1 b+\frac{1}{2}\nabla\cdot\left(\sigma\sigma^T \pi_1\right)=0.
\]
This holds for $b$ being a gradient field and $\sigma$ being square with $\sigma\propto I$.

\section{Pressure and energy formulas for the Lennard-Jones fluids}\label{app:pressureandenergyLJ}

In computing the total energy or computing the rejection probability for the MH algorithms, one needs to compute the summation $\sum_{i,j}U(x_i-x_j)$. For the purpose, we may make use of the rapid decay for the interaction kernel so that we can pick a truncation distance $r_c$ and approximate the summation for $r_{ij}\ge r_c$ by continuous integral. More precisely,
\begin{gather}
\frac{1}{2}\sum_{i,j: i\neq j}U(x_i-x_j)
\approx \frac{1}{2}\sum_{i\neq j: r_{ij}< r_c}U(x_i-x_j)
+\frac{N}{2}\int_{r_c}^{\infty} 4\pi r^2 \rho(r)u(r) \mathrm{d}r.
\end{gather}
In fact, this formula is obtained by
\begin{multline*}
\frac{1}{2}\sum_{i,j: i\neq j}U(x_i-x_j)
=\frac{1}{2}\sum_{i\neq j: r_{ij}< r_c}U(x_i-x_j)+\frac{1}{2}\sum_{i\neq j: r_{ij}\ge r_c}U(x_i-x_j)\\
=\frac{1}{2}\sum_{i\neq j: r_{ij}< r_c}U(x_i-x_j)
+\frac{1}{2}\sum_{i=1}^N \sum_{j: j\neq i, r_{ij}\ge r_c} u(r_{ij}).
\end{multline*}
Due to symmetry, when the thermal fluctuation is small, $\sum_{j: j\neq i, r_{ij}\ge r_c} u(r_{ij})$ should be independent of $i$, and can be approximated by the following integral using radial density $\rho(r)$:
\[
 \sum_{j: j\neq i, r_{ij}\ge r_c} u(r_{ij})
 \approx \int_{r_c}^{\infty} u(r) 4\pi r^2 \rho(r)\,dr.
\] 

With the approximation of the radial density
\[
\rho(r)\approx \rho=\frac{N}{V},~~r>r_c,
\]
the average tail contribution (per particle) $u^{tail}$ of the interaction energy is given by
\begin{align}
u^{tail}=\dfrac{1}{2}\int_{r_c}^{\infty} 4\pi r^2 \rho u(r) \mathrm{d}r=\dfrac{8}{3}\pi\rho\left[\dfrac{1}{3}\left(\dfrac{1}{r_c}\right)^9-\left(\dfrac{1}{r_c}\right)^3\right].
\end{align}
This tail contribution is independent of the particles so that the acceptance probability in the MH algorithm (Algorithm \ref{alg:mh}) can be approximated by
\begin{gather}\label{appeq:mhrate}
\alpha=\min\left\{1, \exp\left[-\beta\Big(\sum_{j: j\neq i, r_{ij}^*\le r_c}  U(x_i^*-x_j)-\sum_{j: j\neq i, r_{ij}\le r_c} U(x_i-x_j)\Big)\right]  \right\}.
\end{gather}
Apparently, $r_{ij}^*$ means $|x_i^*-x_j|$.

The pressure is calculated by \cite[sec. 3.4]{frenkel2001understanding}
\begin{align}
P=\dfrac{\rho}{\beta}+\dfrac{vir}{V},
\end{align}
where the virial is defined by
\begin{gather}\label{appeq:virial}
\begin{split}
vir=\dfrac{1}{3}\sum\limits_{i}\sum\limits_{j>i}f(r_{ij})\cdot \vec{r}_{ij}.
\end{split}
\end{gather}
The function $f(r_{ij})$ in (\ref{appeq:virial}) is the intermolecular force, i.e., the negative gradient
of the Lennard-Jones potential, so that
\[
vir= 8\sum\limits_{i< j}(2r_{ij}^{-12}-r_{ij}^{-6})
\approx 8\sum\limits_{i< j: r_{ij}\le r_c}(2r_{ij}^{-12}-r_{ij}^{-6})
+4 N\int_{r_c}^{\infty}(2r^{-12}-r^{-6})4\pi r^2\rho\,dr.
\]
Note that we split all summation items into $r_{ij}\geq r_c$ and $r_{ij}<r_c$. Because $f(r_{ij})\cdot \vec{r}_{ij}$
decreases rapidly when $r_{ij}\geq r_c$ and the particles far away can be approximated by an average radial density $\rho=N/V$,
the integral approximation for the summation makes sense.
Consequently, the pressure can be approximated by:
\begin{align}
P=\dfrac{\rho}{\beta}+\frac{8}{V}\sum_i \sum\limits_{j: j>i,|r_{ij}|<r_c}(2r_{ij}^{-12}-r_{ij}^{-6})+\dfrac{16}{3}\pi\rho^2 \left[\dfrac{2}{3}\left(\dfrac{1}{r_c}\right)^9-\left(\dfrac{1}{r_c}\right)^3\right].
\end{align}

With the cutoff, for our proposed algorithm (Alogirthm \ref{fastmcmc}), the cutoff affects the SDE step in the sense that the particles away beyond $r_c$ contribute net zero force to the current particle.  As mentioned in the main text, only when the randomly selected particle is within $B(X^i, r_c)$, we will compute its force and update $X^i$; otherwise, we do not update $X^i$ but still increase $k$ until it reaches $m$.

Consider now the periodic boxes that approximates the fluid, with the length of the box being $L$.  A given particle $x_i$ interacts with all other particles in
the system and their periodic images. In this case, the total potential energy is
\begin{align}
U_{tot}=\dfrac{1}{2}\sum\limits_{i,j,\vec{n}}{'}u(|\vec{r}_{ij}+\vec{n}L|),
\end{align}
where $\vec{n}$ is three-dimensional integer vector and the notation $\sum\limits_{i,j,\vec{n}}{'}$ means that the case $\vec{n}=0$ when $i=j$ are excluded.
We pick the truncation length $r_c=L/2$
as commonly used in literature \cite[Chap. 3]{frenkel2001understanding}. The total energy is then approximated as
\begin{gather}
U_{tot}\approx \dfrac{1}{2}\sum_{|\vec{r}_{ij}+\vec{n}L|< r_c}{'} u(|\vec{r}_{ij}+\vec{n}L|)+\dfrac{8N}{3}\pi\rho\left[\dfrac{1}{3}\left(\dfrac{1}{r_c}\right)^9-\left(\dfrac{1}{r_c}\right)^3\right].
\end{gather}
The pressure and the acceptance rate \eqref{appeq:mhrate} in the MH algorithm are similarly modified for periodic boxes.
The modification for our proposed RBMC algorithm (Algorithm \ref{fastmcmc}) is also similar. We do not repeat here.

\bibliographystyle{SIAM}  
\bibliography{mcsplit}

\begin{thebibliography}{10}

\bibitem{allen1987}
{\sc M.~P. Allen and D.~J. Tildesley}, {\em Computer simulation of liquids},
  Oxford university press, 1987.

\bibitem{besag1994comments}
{\sc J.~E. Besag}, {\em Comments on "{R}epresentations of knowledge in complex
  systems" by {U.} {G}renander and {MI} {M}iller}, J. Roy. Statist. Soc. Ser.
  B, 56 (1994), pp.~591--592.

\bibitem{bottou1998online}
{\sc L.~Bottou}, {\em Online learning and stochastic approximations}, On-line
  learning in neural networks, 17 (1998), p.~142.

\bibitem{bubeck2015convex}
{\sc S.~Bubeck}, {\em Convex optimization: Algorithms and complexity},
  Foundations and Trends{\textregistered} in Machine Learning, 8 (2015),
  pp.~231--357.

\bibitem{dai2016provable}
{\sc B.~Dai, N.~He, H.~Dai, and L.~Song}, {\em Provable {B}ayesian inference
  via particle mirror descent}, in Artificial Intelligence and Statistics,
  2016, pp.~985--994.

\bibitem{dalalyan2017}
{\sc A.~S. Dalalyan}, {\em Theoretical guarantees for approximate sampling from
  smooth and log-concave densities}, Journal of the Royal Statistical Society:
  Series B (Statistical Methodology), 79 (2017), pp.~651--676.

\bibitem{erdos2017}
{\sc L.~Erdos and H.-T. Yau}, {\em Dynamical approach to random matrix theory},
  Courant Lecture Notes in Mathematics, 28 (2017).

\bibitem{frenkel2001understanding}
{\sc D.~Frenkel and B.~Smit}, {\em Understanding molecular simulation: {F}rom
  algorithms to applications}, vol.~1, Elsevier, 2001.

\bibitem{gamerman2006markov}
{\sc D.~Gamerman and H.~F. Lopes}, {\em Markov chain Monte Carlo: stochastic
  simulation for Bayesian inference}, Chapman and Hall/CRC, 2006.

\bibitem{georges1996}
{\sc A.~Georges, G.~Kotliar, W.~Krauth, and M.~J. Rozenberg}, {\em Dynamical
  mean-field theory of strongly correlated fermion systems and the limit of
  infinite dimensions}, Reviews of Modern Physics, 68 (1996), p.~13.

\bibitem{gilks1995markov}
{\sc W.~R. Gilks, S.~Richardson, and D.~Spiegelhalter}, {\em Markov chain
  {Monte Carlo} in practice}, Chapman and Hall/CRC, 1995.

\bibitem{hastings1970monte}
{\sc W.~K. Hastings}, {\em {Monte Carlo} Sampling Methods Using {Markov} Chains
  and Their Applications}, Oxford University Press, 1970.

\bibitem{hetenyi2002multiple}
{\sc B.~Hetenyi, K.~Bernacki, and B.~J. Berne}, {\em Multiple ``time step"
  {M}onte {C}arlo}, J. Chem. Phys., 117 (2002), pp.~8203--8207.

\bibitem{hu2019diffusion}
{\sc W.~Hu, C.~J. Li, L.~Li, and J.-G. Liu}, {\em On the diffusion
  approximation of nonconvex stochastic gradient descent}, Annals of
  Mathematical Sciences and Applications, 4 (2019), pp.~3--32.

\bibitem{Israelachvili::2010}
{\sc J.~N. Israelachvili}, {\em Intermolecular and Surface Forces}, Academic
  Press, 3rd~ed., 2010.

\bibitem{jabin2017}
{\sc P.-E. Jabin and Z.~Wang}, {\em Mean field limit for stochastic particle
  systems}, in Active Particles, Volume 1, Springer, 2017, pp.~379--402.

\bibitem{jin2018random}
{\sc S.~Jin, L.~Li, and J.-G. Liu}, {\em Random batch methods ({RBM}) for
  interacting particle systems}, J. Comput. Phys., 400 (2020), p.~108877.

\bibitem{johnson1993lennard}
{\sc J.~K. Johnson, J.~A. Zollweg, and K.~E. Gubbins}, {\em The
  {L}ennard-{J}ones equation of state revisited}, Molecular Physics, 78 (1993),
  pp.~591--618.

\bibitem{kloeden2013numerical}
{\sc P.~E. Kloeden and E.~Platen}, {\em Numerical solution of stochastic
  differential equations}, vol.~23, Springer Science \& Business Media, 2013.

\bibitem{lasry2007}
{\sc J.-M. Lasry and P.-L. Lions}, {\em Mean field games}, Japanese Journal of
  Mathematics, 2 (2007), pp.~229--260.

\bibitem{li2019stochastic}
{\sc L.~Li, Y.~Li, J.-G. Liu, Z.~Liu, and J.~Lu}, {\em A stochastic version of
  {S}tein {V}ariational {G}radient {D}escent for efficient sampling}, Commun.
  Appl. Math. Comput. Sci., 15 (2020).

\bibitem{li2019mean}
{\sc L.~Li, J.-G. Liu, and P.~Yu}, {\em On the mean field limit for {B}rownian
  particles with {C}oulomb interaction in 3{D}}, J. Math. Phys., 60 (2019),
  p.~111501.

\bibitem{lifshitz2013statistical}
{\sc E.~M. Lifshitz and L.~P. Pitaevskii}, {\em Statistical physics: theory of
  the condensed state}, vol.~9, Elsevier, 2013.

\bibitem{liu2016stein}
{\sc Q.~Liu and D.~Wang}, {\em Stein variational gradient descent: A general
  purpose bayesian inference algorithm}, in Advances In Neural Information
  Processing Systems, 2016, pp.~2378--2386.

\bibitem{ma2015complete}
{\sc Y.-A. Ma, T.~Chen, and E.~Fox}, {\em A complete recipe for stochastic
  gradient {MCMC}}, in Advances in Neural Information Processing Systems, 2015,
  pp.~2917--2925.

\bibitem{Martin:SISC:12}
{\sc J.~Martin, L.~C. Wilcox, C.~Burstedde, and O.~Ghattas}, {\em A stochastic
  newton {MCMC} method for large-scale statistical inverse problems with
  application to seismic inversion}, {SIAM J. Sci. Comput.}, {34} ({2012}),
  pp.~{A1460--A1487}.

\bibitem{martin1998novel}
{\sc M.~G. Martin, B.~Chen, and J.~I. Siepmann}, {\em A novel {M}onte {C}arlo
  algorithm for polarizable force fields: application to a fluctuating charge
  model for water}, J. Chem. Phys., 108 (1998), pp.~3383--3385.

\bibitem{metropolis1953}
{\sc N.~Metropolis, A.~W. Rosenbluth, M.~N. Rosenbluth, A.~H. Teller, and
  E.~Teller}, {\em Equation of state calculations by fast computing machines},
  J. Chem. Phys., 21 (1953), pp.~1087--1092.

\bibitem{milstein2013stochastic}
{\sc G.~N. Milstein and M.~V. Tretyakov}, {\em Stochastic numerics for
  mathematical physics}, Springer Science \& Business Media, 2013.

\bibitem{oksendal03}
{\sc B.~{\O}ksendal}, {\em Stochastic differential equations: an introduction
  with applications}, Springer, Berlin, Heidelberg, sixth~ed., 2003.

\bibitem{raginsky2017non}
{\sc M.~Raginsky, A.~Rakhlin, and M.~Telgarsky}, {\em Non-convex learning via
  stochastic gradient {L}angevin dynamics: a nonasymptotic analysis}, arXiv
  preprint arXiv:1702.03849,  (2017).

\bibitem{rezende2015variational}
{\sc D.~J. Rezende and S.~Mohamed}, {\em Variational inference with normalizing
  flows}, arXiv preprint arXiv:1505.05770,  (2015).

\bibitem{risken1996fokker}
{\sc H.~Risken}, {\em The {F}okker-{P}lanck equation}, Springer, 1996.

\bibitem{robbins1951stochastic}
{\sc H.~Robbins and S.~Monro}, {\em A stochastic approximation method}, The
  Annals of Mathematical Statistics,  (1951), pp.~400--407.

\bibitem{roberts1996}
{\sc G.~O. Roberts and R.~L. Tweedie}, {\em Exponential convergence of
  {L}angevin distributions and their discrete approximations}, Bernoulli, 2
  (1996), pp.~341--363.

\bibitem{santambrogio2015}
{\sc F.~Santambrogio}, {\em Optimal transport for applied mathematicians},
  Birk{\"a}user, NY,  (2015), pp.~99--102.

\bibitem{stanley1971}
{\sc H.~E. Stanley}, {\em Phase transitions and critical phenomena}, Clarendon
  Press, Oxford, 1971.

\bibitem{tao2012}
{\sc T.~Tao}, {\em Topics in random matrix theory}, vol.~132, American
  Mathematical Soc., 2012.

\bibitem{welling2011bayesian}
{\sc M.~Welling and Y.~W. Teh}, {\em Bayesian learning via stochastic gradient
  {L}angevin dynamics}, in Proceedings of the 28th International Conference on
  Machine Learning (ICML-11), 2011, pp.~681--688.

\end{thebibliography}

\end{document}